\def\ie{i.~e., }
\newcommand{\cost}{cost}
\def\int{\ensuremath{\mathrm{INT}}\xspace}
\def\ll{\ensuremath{\log\log}}
\def\lll{\ensuremath{\log\log\log}}
\def\minTCO{{\sf Min-TCO}\xspace}
\def\dTCOk{{Min-$d$-TCO($k$)}\xspace}
\def\minHS{{\sf Min-HS}\xspace}
\def\dHS{{\sf $d$-HS}\xspace}
\def\oHS{{\sf $O(d^2)$-HS}\xspace}
\def\tHS{{\sf 2-HS}\xspace}
\def\dHSk{{$d$-HS($k$)}\xspace}
\def\minSC{{\sf Min-SC}\xspace}
\def\minVC{{\sf Min-VC}\xspace}
\newcommand{\ptas}{\ensuremath{{\cal PT\!AS}}\xspace}
\newcommand{\apx}{\ensuremath{{\cal AP\!X}}\xspace}
\newcommand{\logapx}{\ensuremath{{\cal LOG\!APX}}\xspace}
\newcommand{\npo}{\ensuremath{{\cal NPO}}\xspace}
\newcommand{\classP}{\ensuremath{{\cal P}}\xspace}
\newcommand{\classNP}{\ensuremath{{\cal NP}}\xspace}
\newcommand{\ap}{AP\xspace}
\newcommand{\ihs}{I_\mathrm{HS}}
\newcommand{\shs}{Sol_\mathrm{HS}}
\newcommand{\ohs}{Opt_\mathrm{HS}}
\newcommand{\itco}{I_\mathrm{TCO}}
\newcommand{\stco}{Sol_\mathrm{TCO}}
\newcommand{\otco}{Opt_\mathrm{TCO}}
\newcommand{\svc}{Sol_\mathrm{VC}}
\newcommand{\ovc}{Opt_\mathrm{VC}}
\newdefinition{definition}{Definition}
\newdefinition{problem}{Problem} 
\newtheorem{theorem}{Theorem}
\newtheorem{lemma}{Lemma}
\newtheorem{corollary}{Corollary}
\newproof{proof}{Proof}
\begin{document}

\begin{frontmatter}

\title{On the Approximability and Hardness of Minimum Topic Connected Overlay and Its Special 
Instances\tnoteref{thanks}\tnoteref{presented}}
\tnotetext[thanks]{This research is partly supported by the Japan Society for the Promotion of Science, 
Grant-in-Aid for Scientific Research, 21500013, 21680001, 22650004, 22700010, 23104511, 23310104, 
Foundation for the Fusion of Science Technology (FOST) and INAMORI FOUNDATION.
The research is also partially funded by SNF grant 200021-132510/1.}
\tnotetext[presented]{Some of the results of this paper were presented at MFCS 2011 and PODC 2011.}

\author[nagoya]{Jun Hosoda}
\author[eth]{Juraj Hromkovi\v c}\ead{juraj.hromkovic@inf.ethz.ch}
\author[nagoya]{Taisuke Izumi}\ead{t-izumi@nitech.ac.jp}
\author[kyushu]{Hirotaka Ono}\ead{hirotaka@en.kyushu-u.ac.jp}
\author[eth]{Monika Steinov\'a}\ead{monika.steinova@inf.ethz.ch}
\author[nagoya]{Koichi Wada}\ead{wada@nitech.ac.jp}

\address[eth]{Department of Computer Science, ETH Zurich, Switzerland}
\address[kyushu]{Department of Economic Engineering, Kyushu University, Japan}
\address[nagoya]{Graduate School of Engineering, Nagoya Institute of Technology, Japan}

\begin{abstract}
In the context of designing a scalable overlay network to support
decentralized topic-based pub/sub communication, the Minimum
Topic-Connected Overlay problem (\minTCO in short) has been investigated:
Given a set of $t$ topics and a collection of $n$ users together with the 
lists of topics they 
are interested in, the aim is to connect these users to a~network 
by a minimum number of edges such that every graph induced by users 
interested in a common topic is connected.
It is known that \minTCO is \classNP-hard and
approximable within $O(\log t)$ in polynomial time. 

In this paper, we further investigate the problem and some of its special
instances. We give various hardness results for instances where
the number of topics in which an user is interested in is bounded by 
a constant, and also for the instances where the number of users 
interested in a common topic is constant. For the latter case, we 
present a first constant approximation algorithm.
We also present some polynomial-time algorithms for very restricted 
instances of \minTCO.

\end{abstract}

\begin{keyword}
topic-connected overlay, approximation algorithm, APX, hardness
\end{keyword}

\end{frontmatter}

\section{Introduction}

Recently, the spreading of social networks and other services based on sharing content
allowed the development of many-to-many communication, often supported by these services.
Publishers publish information through a logical channel that is consumed by subscribed users.
This environment is often modeled by publish/subscribe (pub/sub) systems that can be classified 
into two categories. When the channels are associated with a collection of attributes and 
the messages are delivered to a subscriber only if their attributes match user-defined 
constraints, we speak about {\em content-based} pub/sub systems.
Each channel in {\em topic-based} pub/sub systems is associated with a single
topic and the messages are distributed to the users via channels by his/her topic selection.
There are numerous implementations of pub/sub systems, for details
see~\cite{AGDSV06,BBQQP07,CRW04,CF05,RPS06,SMPD05,ZZJKK01}.

In our paper, we focus on topic-based peer-to-peer pub/sub systems.
In such a system, subscribers interested in a particular topic have to be connected without the use of
intermediate agents (such as servers). Many aspects of such a system can be studied (see~\cite{CMTV07,OR09}).
Minimizing the diameter of the overlay network can minimize the overall time in which a message is
distributed to all the subscribers. When minimizing the (average) degree of nodes in the network,
the subscribers need to maintain a smaller number of connections.
In this paper, we study the minimization of the overall number of connections in the system.
A~small number of connections may be necessary due to maintenance requirements or may be
helpful since thus information aggregated into a single message can be broadcasted to the network and thus
amortize the head count of otherwise small messages.

We study here the hardness of {\em Minimum Topic-Connected Overlay} (\minTCO) which was studied in
different scenarios in~\cite{AAR10,CMTV07,KS08,KS03}. In \minTCO, we are given a collection of 
users, a set of topics, and a user-interest assignment, we want to connect users in 
an overlay network $G$ such that all users interested in a common topic are connected and the 
overall number of edges in $G$ is minimal.
The hardness of the problem was studied in \cite{CMTV07} and \cite{AAR10}. In~\cite{CMTV07}, 
the inapproximability by a constant was proved and a logarithmic-factor approximation algorithm 
was presented. In \cite{AAR10}, the lower bound on the approximability of \minTCO was improved
to $\Omega(\log(n))$, where $n$ is the number of users. 

Moreover, we focus here on special instances of \minTCO. We study the case where, for each topic, there
is a constant number of users interested in it. 
We also consider the case where the number of topics in which any
user is interested is bounded by a constant. We believe that such restrictions on the instances 
have wide practical applications such as when a publisher has a limited number of slots for users
or the user's application limits the number of topics that he/she can follow.

In the study of the general \minTCO, we extend the method presented in~\cite{CMTV07} and design an 
approximation-preserving reduction from instances of the minimum hitting set problem to instances 
of \minTCO. This reduction does not only prove a similar lower bound as in~\cite{AAR10}, but also shows
that \minTCO is \logapx-complete and thus, concerning approximability, equivalent with such a famous 
problem as the minimum set cover. As our reduction is not blowing up the number of users interested 
in a common topic, the reduction is also an~approximation-preserving reduction for the case where
the number of users interested in a common topic is limited to a constant. 
Furthermore, we design a one-to-one reduction of these instances to special instances of the
hitting set problem. As these special instances of the hitting set problem are constantly approximable,
we immediately obtain the first approximation algorithm for our special instances.
This, together with our approximation preserving reduction, shows that
the restriction of \minTCO to such special instances is \apx-complete.
Finally, due to the one-to-one reduction and the properties of the special instances of hitting set problem,
we show the existence of a polynomial-size kernel and a non-trivial exact algorithm, all for the
instances of \minTCO where the number of users interested in a common topic is bounded by a constant.

For the case, where the number of topics of \minTCO is bounded from above 
by $(1+\varepsilon(n))^{-1}\cdot\ll n$, for
$\varepsilon(n)\ge\frac{3/2\lll n}{\ll n-3/2\lll n}$ ($n$ is the number of users), we present a 
polynomial-time algorithm that computes the optimal solution.

In the study of instances where the number of topics any user is interested in is restricted
to a constant, we show that, if this number is at most 6, \minTCO cannot be approximated within
a factor of $694/693$ in polynomial time, unless $\classP = \classNP$, even if any pair of two
users is interested in at most three common topics.

The paper is organized as follows. Section~\ref{sec:prelim} is devoted to the preliminaries
and a summary of known results. The hardness, approximation results, kernelization and
an exact algorithm for instances of \minTCO, where we limit
the number of users interested in a common topic by a constant, are discussed in 
Section~\ref{sec:hardness-topic-bounded}. This section also provides the discussion
about \logapx-completeness of the general \minTCO.
The results related to the instances of \minTCO, where
the number of topics that each user is interested in is constant, are presented in 
Section~\ref{sec:hardness-user-bounded}. Section~\ref{sec:bounded-number-of-topics} contains 
a polynomial-time algorithm that solves \minTCO when the number of topics is small.
The conclusion is provided in Section~\ref{sec:conclusion}.

\section{Preliminaries}\label{sec:prelim}
In this section, we define basic notions used throughout the paper.
We assume that the reader is familiar with notions of graph theory. 
Let $G=(V,E)$ be an~undirected graph, where $V$ is the set of vertices
and $E$ is the set of edges. Let $V(G)$ and $E(G)$ denote the set of
vertices and the set of edges of $G$, respectively. We denote by $E[S]$ the 
set of edges of $G$ in the subgraph induced by the vertices from $S \subseteq V$,
\ie $E[S] = \{\{u, v\} \in E\mid u, v \in S\}$. The graph induced by 
$S \subseteq V$ is denoted as $G[S] = (S, E[S])$. By $N[v]$ we denote
the {\em closed neighborhood} of vertex $v$, \ie 
$N[v] = \{u\in V \mid \{u,v\} \in E\} \cup \{v\}$. A graph $G$ is called 
{\em connected}, if, for any $u_1, u_{\ell} \in V$,  there exists a path 
$(u_1, u_2, \ldots, v_{\ell})$ such that $\{u_i, u_{i+1}\}\in E$, 
for all $1 \le i < \ell$. 

Let $x$ be an instance of an optimization problem (in this paper, \minTCO, 
\minVC or \minHS), then by $|x|$ we denote the size of this instance, \ie the number 
of vertices and topics of an instance of \minTCO and the number of elements and 
sets of an instance of \minHS. For a set $S$, $|S|$ 
denotes the size of the set, \ie the number of its elements.

The set of users or nodes of our network is denoted by $U=\{u_1, u_2, \ldots, u_n\}$.
The topics are $T=\{t_1, t_2, \ldots, t_m\}$.
Each user subscribes to several topics. This relation is expressed by
the user interest function $\int:U\rightarrow2^{T}$. The set of all vertices of $U$ 
interested in a topic $t$ is denoted by $U_t$. For instance, if user $u\in U$ is 
interested in topics $t_1$, $t_3$ and $t_4$, then we have $\int(u)=\{t_1, t_3, t_4\}$ 
and $u \in U_{t_1}, U_{t_3}, U_{t_4}$. For a given set of users $U$, a set of topics $T$, 
and an interest function \int, we say that a graph $G=(U,E)$ with 
$E \subseteq \{\{u,v\} \mid u,v \in U \land u \neq v\}$ is {\em $t$-topic-connected}, 
for $t \in T$, if the subgraph $G[U_t]$ is connected. We call the graph 
{\em topic-connected} if it is $t$-topic-connected for each topic $t \in T$.
Note that the topic-connectedness property implies that a message published for topic $t$ 
is transmitted to all users interested in this topic without using non-interested users
as intermediate nodes.

The most general problem that we study in this paper is called {\em Minimum Topic Connected Overlay}:
\begin{problem} \label{def:minTCO}
\minTCO is the following optimization problem:
\begin{description}
\item[Input:] A set of users $U$, a set of topics $T$, and an user interest function
$\int: U \rightarrow 2^{T}$.
\item[Feasible solutions:] Any set of edges $E \subseteq \{\{u,v\} \mid u,v\in U \land~u\neq v\}$ 
such that the graph $(U,E)$ is topic-connected.
\item[Costs:] Size of $E$.
\item[Goal:] Minimization.
\end{description}
\end{problem}

In this paper we study also some of its special instances. We restrict the number of users
that are interested in a common topic, \ie the size of $U_t$, to a constant.
We also study the instances where each user is interested in a constant number 
of topics.
The definitions necessary for these special instances are summarized in the beginning
of the corresponding section.

We refer here to the famous {\em minimum hitting set problem} (\minHS) and {\em minimum
set cover problem} (\minSC). In \minHS, we are given a system of sets 
${\cal S} = \{S_1, \dots, S_m\}$ on $n$ elements $X = \{x_1, \dots, x_n\}$ (\ie $S_j \subseteq X$).
A feasible solution of this problem is a set $H \subseteq X$, such that $S_j \cap H \neq \emptyset$ 
for all $j$. Our goal is to minimize the size of $H$.
The \minSC is the dual problem to \minHS. In this problem, we are given a system of sets 
${\cal S} = \{S_1, \dots, S_m\}$ on $n$ elements $X = \{x_1, \dots, x_n\}$, a feasible solution
is a set $S \subseteq {\cal S}$ of sets such that for all $i$ there exists $j$ such that
$x_i \in S_j \in S$ and the goal is the minimization of the size of $S$.

There are many modifications and subproblems of the hitting set problem that are intensively studied.
In our paper, we refer to the \dHS problem -- a~restriction of \minHS to instances where
$|S_i| \le d$ for all $i$.

The \minHS is equivalent to the \minSC (\cite{AAP80}), 
thus all the properties of \minSC carry over to \minHS. Following from these properties,
we have \logapx-completeness of \minHS (\cite{CKS01}) and \apx-completeness of \dHS (\cite{PY91}).
There is a well known $d$-approximation algorithm for \dHS (\cite{BE81}), it can be approximated with ratio 
$d - \frac{(d-1)\ln\ln n}{\ln n}$ (\cite{Hal02}), it is \classNP-hard to approximate it within
a~factor ($d-1-\varepsilon$) (\cite{DGKR03}) and \dHS is not approximable within a~factor better 
than $d$, unless the unique games conjecture fails (\cite{KR08}).

We use the standard definitions from complexity theory (for details see~\cite{Hro03}):
\begin{itemize}
\item For \npo problems in the class \ptas, there exists an algorithm that, for arbitrary
$\varepsilon>0$, produces a solution in time polynomial in the input size (but possibly 
exponential in $1/\varepsilon$ that is within a~factor $(1+\varepsilon)$ from optimal.
\item The \npo problems in the class \apx are approximable by some constant-factor approximation 
algorithm in polynomial time.
\item For \npo problems in the class \logapx, there exists a polynomial-time logarithmic-factor
approximation algorithm.
\end{itemize}

Thus $$\ptas \subseteq \apx \subseteq \logapx\mathrm{.}$$

\begin{definition} \label{def:ptasred}
Let $A$ and $B$ be two \npo minimization problems.
Let $I_A$ and $I_B$ be the sets of the instances of $A$ and $B$, respectively.
Let $S_A(x)$ and $S_B(y)$ be the sets of the feasible solutions and let $\cost_A(x)$ and 
$\cost_B(y)$ be polynomially computable measures of the instances $x \in I_A$ and 
$y \in I_B$, respectively. We say that $A$ is {\em \ap-reducible} to $B$, if there exist 
functions $f$ and $g$ and a constant $\alpha>0$ such that:
\begin{enumerate}
\item For any $x \in I_A$ and any $\varepsilon>0$, $f(x,\varepsilon) \in I_B$.
\item For any $x \in I_A$, for any $\varepsilon>0$, and any $y \in S_B(f(x,\varepsilon))$,
$g(x,y,\varepsilon) \in S_A(x)$.
\item The functions $f$ and $g$ are computable in polynomial time with respect to the sizes of instances
$x$ and $y$, for any fixed $\varepsilon$.
\item The time complexity of computing $f$ and $g$ is nonincreasing with $\varepsilon$ for 
all fixed instances of size $|x|$ and $|y|$.
\item For any $x \in I_A$, for any $\varepsilon>0$, and for any $y \in S_B(f(x,\varepsilon))$
\begin{align*}
\frac{\cost_B(y)}{\min \{\cost_B(z)~|~z \in S_B(f(x,\varepsilon))\}} & \leq 1+\varepsilon~implies\\
\frac{\cost_A(g(x,y,\varepsilon))}{\min\{\cost_A(z)~|~z \in S_A(x)\}}& \leq 1+\alpha\cdot\varepsilon\mathrm{.}
\end{align*}
\end{enumerate}
\end{definition}

\section{Results for \minTCO When The Number of Users Interested in a Common 
Topic is a Constant} \label{sec:hardness-topic-bounded}

In this whole section, we denote by a triple $(U,T,\int)$ an instance of \minTCO.
We focus here on the case where the number of users that share a topic $t$, \ie $\max_{t\in T}|U_t|$, is
bounded. 

We present here a lower bound on the approximability, a constant
approximation algorithm and an \apx-completeness proof for these restricted instances of \minTCO.

\subsection{Hardness results} \label{subsec:hardness}

It is easy to see that, if $\max_{t\in T}|U_t|\le2$, then \minTCO can be solved in linear time, 
because two users sharing a topic $t$ should be directly connected by an edge, which is the
unique minimum solution.
 
\begin{theorem}
If $\max_{t\in T}|U_t| \le 2$, then \minTCO can be solved in linear time.
\end{theorem}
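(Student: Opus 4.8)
The plan is to make precise the intuition already sketched in the text: when every topic is shared by at most two users, the only constraint is that each pair of users sharing a topic must be directly adjacent, and conversely this set of forced edges is always feasible. First I would observe that a topic $t$ with $|U_t|\le 1$ imposes no constraint, since $G[U_t]$ is trivially connected (a graph on at most one vertex). For a topic $t$ with $|U_t|=2$, say $U_t=\{u,v\}$, the subgraph $G[U_t]$ is connected if and only if the edge $\{u,v\}$ is present in $E$. Hence define $E^* = \{\{u,v\} \mid \exists\, t\in T \text{ with } U_t=\{u,v\}\}$. I would then argue the two directions: (i) the graph $(U,E^*)$ is topic-connected, because for every topic $t$ either $|U_t|\le 1$ (no constraint) or $|U_t|=2$ and the required edge is in $E^*$ by construction; and (ii) every feasible solution $E$ must satisfy $E^*\subseteq E$, since each edge of $E^*$ is forced by some topic. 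Therefore $E^*$ is a feasible solution of minimum size, and it is the unique minimum solution.

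The remaining point is the running time. Here I would note that $|E^*|\le m$, where $m=|T|$, so $E^*$ can be built by a single pass over the topics: for each $t$, look at $U_t$, and if $|U_t|=2$ add the corresponding edge (using, e.g., a hash set or a direct-address structure on pairs to avoid listing a duplicate edge twice). Reading the instance already takes $\Theta(|x|)$ time where $|x|$ counts users and topics together with the interest lists, and the construction adds only linear overhead, so the total time is linear in the input size. I would state this as the conclusion: $E^*$ is computed in linear time and is the optimal (indeed unique optimal) solution.

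I do not expect any real obstacle here; the statement is essentially a definitional unwinding. The only mild care needed is (a) to handle the degenerate cases $|U_t|\in\{0,1\}$ explicitly so the "forced edge" argument is not vacuously misapplied, and (b) to be slightly careful about the input-size convention — since $|x|$ is defined in the preliminaries as the number of vertices and topics, one should make sure the interest function is read in time commensurate with the stated linear bound, which is standard. Given how short the argument is, I would keep the written proof to a few lines.
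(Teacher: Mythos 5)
Your proposal is correct and is essentially the same argument the paper gives (the paper only states it in one sentence before the theorem: each topic shared by two users forces the corresponding edge, and this forced set is the unique minimum solution). Your write-up just fills in the routine details about degenerate topics and the linear-time pass, so no further comment is needed.
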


We extend the methods from \cite{CMTV07} and design an \ap-reduction from \dHS to \minTCO,
where $\max_{t\in T}|U_t| \le d+1$.

\begin{theorem} \label{theo:AP-red-dHS-minTCO}
For arbitrary $d\ge2$, there exists an \ap-reduction from \dHS to \minTCO, where $\max_{t\in T}|U_t|\le d+1$.
\end{theorem}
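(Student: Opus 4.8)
I plan to refine the hub reduction of~\cite{CMTV07}. Given a \dHS instance $(X,\mathcal S)$ with $|S|\le d$ for every $S\in\mathcal S$, the reduction $f$ (independent of $\varepsilon$) creates one ``hub'' user $h$, one user $u_x$ for every element $x\in X$, and, for every set $S\in\mathcal S$, a topic $t_S$ whose interested users are exactly $\{h\}\cup\{u_x\mid x\in S\}$, so that $|U_{t_S}|=|S|+1\le d+1$ automatically; on top of this skeleton I would install a family of auxiliary topics, each with at most three interested users (hence harmless once $d\ge 2$), whose only role is to fix a ``backbone'' of forced edges among the $u_x$ and to force an overlay to pay for superfluous edges incident to $h$. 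The companion map $g$ reads a hitting set off a given overlay $G$, returning $H_G:=\{x\in X\mid\{h,u_x\}\in E(G)\}$, or the trivial solution $X$ if $H_G$ happens not to hit every set.

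The heart of the argument is a two-sided comparison of optima. For completeness I would show that from any hitting set $H$ one obtains a topic-connected overlay by joining $h$ to the users of $H$ and adding the forced backbone, and a short connectivity check yields $\otco(f(x))\le\Phi(\ohs(x))$ for an explicit affine $\Phi$ of \emph{constant} slope. For soundness I would argue that, in any topic-connected overlay $G$, the set $H_G$ is automatically a hitting set --- were some $S$ unhit, $h$ would be isolated in $G[U_{t_S}]$ --- and that a careful edge count bounds $|H_G|$ by essentially $|E(G)|$ minus the number of forced backbone edges. Plugging an optimal overlay into the soundness bound and comparing it with completeness pins $\otco(f(x))$ down in terms of $\ohs(x)$, and then the implication demanded by condition~5 of Definition~\ref{def:ptasred} follows with a constant $\alpha$ governed only by the slope of $\Phi$. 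Conditions~1--4 are routine: $f(x)$ is by construction a legal \minTCO instance with $\max_t|U_t|\le d+1$, $g$ always outputs a feasible hitting set, both are polynomial-time computable, and, being independent of $\varepsilon$, they trivially satisfy the monotonicity requirements.

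The step I expect to be the real obstacle is exactly making the reduction approximation-preserving, \ie verifying condition~5 with a single constant $\alpha$ valid for \emph{all} $\varepsilon>0$; this is why the plain hub gadget of~\cite{CMTV07}, which only preserves \classNP-hardness, does not suffice. A careless construction leaves an additive overhead that is not comparable with $\ohs$ --- the edges merely making the $u_x$'s mutually reachable inside every $G[U_{t_S}]$, of which there may be $\Theta(|X|)$ while $\ohs$ remains bounded --- and such an overhead lets a $(1+\varepsilon)$-optimal overlay absorb a large number of redundant edges at $h$ and still look near-optimal, so that no constant $\alpha$ could exist. The auxiliary topics therefore have to be arranged so that over-connecting $h$ is genuinely expensive and the cost of an overlay reflects, up to a multiplicative constant, precisely the size of the hitting set it encodes; carrying out this accounting --- and doing so without pushing any topic above $d+1$ users --- is where the work lies. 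Finally, the very same construction with the cardinality bound dropped is an \ap-reduction from \minHS to \minTCO, which, together with the known logarithmic-factor approximation algorithm for \minTCO, yields \logapx-completeness of \minTCO.
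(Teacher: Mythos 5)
Your proposal correctly identifies the crux --- the forced ``backbone'' edges among the $u_x$'s constitute an additive overhead of order $|X|$ that is incomparable with $\ohs$, so a $(1+\varepsilon)$-optimal overlay can hide $\varepsilon\cdot\Theta(|X|)$ junk edges at the hub --- but it stops exactly there and defers the resolution to unspecified ``auxiliary topics'' that make over-connecting $h$ ``genuinely expensive.'' This is a genuine gap, and I do not believe your proposed fix can be carried out: in \minTCO every edge costs exactly $1$, so there is no mechanism by which extra topics can increase the price of an edge incident to a single hub $h$; whatever auxiliary topics you add, the slack $\varepsilon\cdot\otco\ge\varepsilon\cdot\Omega(|X|)$ available to a $(1+\varepsilon)$-approximate overlay can still be spent entirely on redundant hub edges, each of which costs the same as a backbone edge, and the recovered hitting set is then off by an additive $\varepsilon\cdot\Omega(|X|)$ while $\ohs$ may be $O(1)$. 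Your insistence that $f$ be independent of $\varepsilon$ makes this unrecoverable; note that Definition~\ref{def:ptasred} explicitly allows $f$ to depend on $\varepsilon$, and the paper exploits this.

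The paper's missing idea is \emph{amplification}: instead of one hub, it creates $k=|X|^2\cdot\lceil(1+\varepsilon)/\varepsilon\rceil$ special users $p_1,\dots,p_k$ and $k$ disjoint copies $t_{S_j}^i$ of each topic, copy $i$ being shared by $S_j$ and $p_i$ alone. Each level $L_i$ (the edges at $p_i$) independently encodes a hitting set, so the cheapest level has cost at most $\cost(\stco)/k$, while the entire inter-$X$ overhead is bounded once and for all by $|X|^2$ and is therefore washed out: $k\cdot\cost(\ohs)+|X|^2\ge\otco$ gives $\cost(\shs)/\cost(\ohs)\le(1+\varepsilon)(1+|X|^2/k)\le1+2\varepsilon$, \ie $\alpha=2$. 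In effect the hub edges are made ``expensive'' not by pricing them up but by forcing the solution to buy $k$ independent copies of a hitting set while paying for the backbone only once. Without this (or an equivalent averaging device) your soundness accounting cannot close, so as written the proposal establishes only \classNP-hardness, not an \ap-reduction.
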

\begin{proof}
Let $\ihs=(X,{\cal S})$ be an instance of \dHS and let $\varepsilon > 0$ be arbitrary. We omit the subscript in
the functions $\cost_{d\mathrm{-HS}}$ and $\cost_{\mathrm{Min-TCO}}$ as they are unambiguous.
For the instance $\ihs$, we create an instance $\itco=(U, T, \int)$ of \minTCO with $\max_{t\in T}|U_t| \le d+1$
with $|X|+k$ users, where $k=|X|^2 \cdot \big\lceil\frac{1+\varepsilon}{\varepsilon}\big\rceil$, as 
follows (the function $f$ in the definition of \ap-reduction).
\begin{align*}
U & = X \cup \{p_i~|~p_i \notin X \land 1 \le i \le k\}, \\
T & = \{t_{S_j}^i~|~S_j \in {\cal S} \land 1 \le i \le k\}, \\
\int(x) & = \begin{cases} 
   \{t_{S_j}^i~|~x\in S_j \land S_j \in {\cal S} \land 1 \le i \le k\} & \quad \mathrm{for~} x \in X \\ 
   \{t_{S_j}^i~|~S_j \in {\cal S}\} & \quad \mathrm{for~} x=p_i  
 \end{cases}
\end{align*}
 
Observe that the instance contains $k \cdot |{\cal S}|$ topics and its size is polynomial in the size of $\ihs$.
The users interested in a topic $t^i_{S_j}$ ($S_j \in {\cal S}$) are
exactly the elements that are members of set $S_j$ in \dHS plus a {\em special user $p_i$} ($1 \le i \le k$).
Let $\stco$ be a feasible solution of \minTCO on instance $\itco$.
We partition the solution into levels. Level $i$ is a set $L_i$ of the edges of $\stco$ that are incident
 with the special user $p_i$. In addition, we denote by $L_0$ the set of edges of $\stco$ that
are not incident with any special user. Therefore, $\stco = \bigcup_{i=0}^{k} L_i$
and $L_i \cap L_j = \emptyset$ ($0 \le i < j \le k$).
 
We claim that, for any $L_i$ ($1 \le i \le k$), the set of the non-special users incident with edges of $L_i$
is a feasible solution of the instance $\ihs$ of \dHS. This is true since, if a set $S_j \in {\cal S}$ is not hit,
none of the edges $\{x,p_i\}$ ($x \in S_j$) is in $L_i$.
But then the users interested in topic $t^i_{S_j}$ are not interconnected as user $p_i$ is disconnected.
 
Let $j$ be chosen such that $L_j$ is the smallest of all sets $L_i$, for $1 \le i \le k$.
We construct $\shs$ by picking all the non-special users that are incident to some edge from $L_j$
(the function $g$ in the definition of \ap-reduction).
Denote an~optimal solution of \dHS and \minTCO for $\ihs$ and $\itco$ by $\ohs$ and $\otco$, respectively.
 
If we knew $\ohs$, we would be able to construct a feasible solution of \minTCO on $\itco$ as follows.
First, we pick the edges $\{x,p_i\}$, $x \in \ohs$, for all special users $p_i$, and include them in
the solution.
This way, for any topic $t \in \int(p_i)$, we connect $p_i$ to some element of $X$ that
is interested in $t$, too. To have a feasible solution, we could miss some edges between some
elements of $X$. So, we pick all the edges between elements from $X$. The feasible
solution of \minTCO on $\itco$ that we obtain has roughly cost
$$k \cdot \cost(\ohs) + |X|^2 \ge \cost(\otco)\mathrm{.}$$
 
On the other hand, if we replace all levels $L_i$ ($1\le i \le k$) by level $L_j$ in $\stco$, we
still have a feasible solution of \minTCO on $\itco$, with cost possibly smaller. Thus
$$k \cdot \cost(\shs) \le \cost(\stco)\mathrm{.}$$
 
We use these two inequalities to bound the cost of $\shs$:
$$k \cdot\cost(\shs) \le \frac{\cost(\stco)}{\cost(\otco)} \cdot \left(k \cdot \cost(\ohs) + |X|^2\right)$$
and thus
$$\frac{\cost(\shs)}{\cost(\ohs)} \le \frac{\cost(\stco)}{\cost(\otco)} \cdot 
\left(1 + \frac{|X|^2}{k}\right)\mathrm{.}$$
 
If $\cost(\stco)/\cost(\otco) \le 1+\varepsilon$ and $\alpha:=2$, then we have
\begin{align*}
\frac{\cost(\shs)}{\cost(\ohs)} & \le (1+\varepsilon) \cdot\left(1 +\frac{|X|^2}{k}\right) \le
(1+\varepsilon) \cdot\left(1+\frac{\varepsilon}{1+\varepsilon}\right) = 1+2\varepsilon\mathrm{.}
\end{align*}
 
It is easy to see that the five conditions of Definition~\ref{def:ptasred} are satisfied and thus we
have an \ap-reduction.
\qed
\end{proof}
 
\begin{corollary}
For any $\delta>0$ and polynomial-time $\alpha$-approximation algorithm of \minTCO with
$\max_{t\in T}|U_t| \le d+1$, there exists a polynomial-time $(\alpha+\delta)$-approximation
algorithm of \dHS.
\end{corollary}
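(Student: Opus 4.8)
The plan is to run the \ap-reduction of Theorem~\ref{theo:AP-red-dHS-minTCO} with a suitably small construction parameter~$\varepsilon$, feed the resulting \minTCO instance to the given algorithm, and pull the solution back through the function~$g$; the only thing that needs checking is that the quantitative bound obtained inside the proof of Theorem~\ref{theo:AP-red-dHS-minTCO} still does the job when the \minTCO solution is merely $\alpha$-approximate rather than $(1+\varepsilon)$-approximate.

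Concretely, let $\mathcal{A}$ be a polynomial-time $\alpha$-approximation algorithm for \minTCO on instances with $\max_{t\in T}|U_t|\le d+1$ (so $\alpha\ge1$), and let $\delta>0$. I would put $\varepsilon:=\delta/\alpha>0$, a constant. Given an instance $\ihs=(X,{\cal S})$ of \dHS, I would: (i)~compute $\itco:=f(\ihs,\varepsilon)$, which is an instance of \minTCO with $\max_{t\in T}|U_t|\le d+1$ and size polynomial in $|\ihs|$ (the parameter $k=|X|^2\cdot\lceil(1+\varepsilon)/\varepsilon\rceil$ is polynomial because $\varepsilon$ is a constant); (ii)~run $\mathcal{A}$ on $\itco$ to obtain a feasible solution $\stco$; (iii)~output $\shs:=g(\ihs,\stco,\varepsilon)$, i.e.\ the set of non-special users incident to the cheapest level of $\stco$, which is a feasible solution of $\ihs$ by the argument in the proof of Theorem~\ref{theo:AP-red-dHS-minTCO}.

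For the approximation guarantee, the key observation is that the proof of Theorem~\ref{theo:AP-red-dHS-minTCO} establishes
$$\frac{\cost(\shs)}{\cost(\ohs)}\le\frac{\cost(\stco)}{\cost(\otco)}\cdot\Big(1+\frac{|X|^2}{k}\Big)$$
for \emph{every} feasible $\stco$, with no hypothesis on how close it is to $\otco$. Since $|X|^2/k=1/\lceil(1+\varepsilon)/\varepsilon\rceil\le\varepsilon$ and $\mathcal{A}$ guarantees $\cost(\stco)/\cost(\otco)\le\alpha$, this yields $\cost(\shs)/\cost(\ohs)\le\alpha(1+\varepsilon)=\alpha+\alpha\varepsilon=\alpha+\delta$. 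As $f$, $\mathcal{A}$ and $g$ all run in polynomial time for the fixed constant~$\varepsilon$, the combined procedure is a polynomial-time $(\alpha+\delta)$-approximation algorithm for \dHS.

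I do not expect a genuine obstacle here. The only subtlety worth flagging is that one must not simply invoke clause~5 of Definition~\ref{def:ptasred} with the reduction's constant (which is $2$ in the proof of Theorem~\ref{theo:AP-red-dHS-minTCO}): since $\mathcal{A}$ is a constant-factor, not a $(1+\varepsilon)$-approximation, that clause would only give a ratio like $2\alpha-1$. The point is instead to reuse the finer inequality inside the proof, where the gadget's relative error $|X|^2/k$ is completely decoupled from the quality of $\mathcal{A}$ and can be driven below $\delta/\alpha$ by choosing $\varepsilon$ small; the factor $\alpha$ then passes through multiplicatively and the gadget contributes only the additive slack $\delta$.
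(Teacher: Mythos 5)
Your proposal is correct and follows essentially the same route as the paper: the paper's one-line proof simply instantiates the reduction of Theorem~\ref{theo:AP-red-dHS-minTCO} with $k:=|X|^2\cdot\lceil\alpha/\delta\rceil$, which makes the gadget overhead $|X|^2/k\le\delta/\alpha$ and then relies on exactly the inequality $\cost(\shs)/\cost(\ohs)\le(\cost(\stco)/\cost(\otco))\cdot(1+|X|^2/k)$ that you isolate. Your choice $\varepsilon:=\delta/\alpha$ yields an equivalent value of $k$, and your remark that one must reuse this finer inequality rather than clause~5 of Definition~\ref{def:ptasred} is a correct and worthwhile clarification of what the paper leaves implicit.
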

\begin{proof}
The approximation algorithm for \dHS would use Theorem~\ref{theo:AP-red-dHS-minTCO} with
$k:=|X|^2 \cdot \lceil\frac{\alpha}{\delta}\rceil$.
\qed
\end{proof}

Our theorem also implies the following negative results on approximability. 
One of them holds if {\em unique games conjecture} is true. This conjecture is discussed,
for example, in \cite{WS11} and was introduced by Khot in \cite{Kho02}.

\begin{corollary}
\minTCO with $\max_{t\in T}|U_t| \le d$ ($d\ge3$) is \classNP-hard to approximate within a factor
of $(d-1-\varepsilon)$, for any $\varepsilon>0$, and, if the unique games conjecture holds, there is no 
polynomial-time \mbox{$(d-\varepsilon)$}-approximation algorithm for it.
\end{corollary}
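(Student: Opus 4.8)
The plan is to obtain this corollary as a direct consequence of the \ap-reduction of Theorem~\ref{theo:AP-red-dHS-minTCO}, composed with the known inapproximability of bounded hitting set. The one new ingredient beyond Theorem~\ref{theo:AP-red-dHS-minTCO} is a shift of parameter: to reach instances of \minTCO with $\max_{t\in T}|U_t|\le d$, one applies Theorem~\ref{theo:AP-red-dHS-minTCO} and the corollary following it with the hitting set parameter set to $d-1$ rather than $d$. This is admissible because $d\ge3$ guarantees $d-1\ge2$, which is exactly the regime in which that reduction is stated, and the instances it outputs satisfy $\max_{t\in T}|U_t|\le(d-1)+1=d$ as required. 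In this instantiation, the corollary following Theorem~\ref{theo:AP-red-dHS-minTCO} says that every polynomial-time $\alpha$-approximation algorithm for \minTCO restricted to $\max_{t\in T}|U_t|\le d$ yields, for every $\delta>0$, a polynomial-time $(\alpha+\delta)$-approximation algorithm for $(d-1)$-HS.

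From here the argument is by contradiction. Invoke the two cited lower bounds for \dHS at the parameter value $d-1$: \cite{DGKR03} gives the \classNP-hardness of approximating $(d-1)$-HS within the stated factor, and \cite{KR08} gives, conditionally on the unique games conjecture, the non-existence of a polynomial-time approximation within the corresponding (larger) factor. Suppose \minTCO with $\max_{t\in T}|U_t|\le d$ admitted a polynomial-time $\alpha$-approximation algorithm with $\alpha$ strictly below the relevant $(d-1)$-HS threshold. Pick $\delta>0$ small enough that $\alpha+\delta$ still lies below that threshold; then the corollary following Theorem~\ref{theo:AP-red-dHS-minTCO} converts this into a polynomial-time $(\alpha+\delta)$-approximation algorithm for $(d-1)$-HS, contradicting the cited bound --- unconditionally in the first case, and under the unique games conjecture in the second. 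Negating, no such $\alpha$ exists, which is precisely the asserted statement.

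I do not expect a genuinely hard step: the corollary is a bookkeeping consequence of Theorem~\ref{theo:AP-red-dHS-minTCO} together with off-the-shelf hardness of hitting set. The only point that deserves a moment's care is that the reduction loses an \emph{additive} term $\delta$ in the approximation ratio rather than a multiplicative factor; since the hitting set lower bounds are quantified as ``within $c-\varepsilon$ for every $\varepsilon>0$'', this $\delta$ is harmless and is absorbed into $\varepsilon$, so the inapproximability factor transfers unchanged in the limit. Beyond that, one only checks that the substitution $d\mapsto d-1$ stays within the valid range $d-1\ge2$, and recalls that the corollary following Theorem~\ref{theo:AP-red-dHS-minTCO} already packages the reduction in the convenient ``$\alpha$-approx $\Rightarrow$ $(\alpha+\delta)$-approx'' form, so nothing further needs to be set up (in particular, no re-derivation of the parameter choice $k=|X|^2\cdot\lceil\alpha/\delta\rceil$ is needed).
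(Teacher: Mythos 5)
Your overall strategy---composing the \ap-reduction of Theorem~\ref{theo:AP-red-dHS-minTCO} with the known inapproximability of bounded hitting set---is the same as the paper's, but your parameter bookkeeping is not, and the difference sits exactly where the difficulty of this corollary lies. You correctly observe that to land on instances with $\max_{t\in T}|U_t|\le d$ one must start the reduction from $(d-1)$-HS. But you then assert that the cited lower bounds ``at the parameter value $d-1$'' yield ``the stated factor''; they do not. The bounds of \cite{DGKR03} and \cite{KR08} for $q$-HS are $(q-1-\varepsilon)$ and $(q-\varepsilon)$ respectively, so at $q=d-1$ they give $(d-2-\varepsilon)$ and $(d-1-\varepsilon)$. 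Running your contradiction argument with these numbers rules out polynomial-time ratios $(d-2-\varepsilon)$ unconditionally and $(d-1-\varepsilon)$ under the unique games conjecture---each threshold one less than the corollary claims. Concretely, a $(d-1-\varepsilon)$-approximation for \minTCO with $\max_{t\in T}|U_t|\le d$ would, via your reduction, give roughly a $(d-1)$-approximation for $(d-1)$-HS, which contradicts neither \cite{DGKR03} nor \cite{KR08}; no contradiction is reached at the claimed threshold.

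The paper's own one-line proof takes the opposite horn: it applies the reduction from \dHS with no parameter shift, so the thresholds $(d-1-\varepsilon)$ and $(d-\varepsilon)$ transfer verbatim, but the instances that reduction produces satisfy only $\max_{t\in T}|U_t|\le d+1$, not $\le d$. In other words, what the composition actually yields is the stated pair of factors for the restriction $\max_{t\in T}|U_t|\le d+1$, or equivalently the pair $(d-2-\varepsilon)$, $(d-1-\varepsilon)$ for $\max_{t\in T}|U_t|\le d$; this is also consistent with the paper's later corollary that \minTCO with $\max_{t\in T}|U_t|\le 3$ inherits only the hardness of \minVC (a UGC threshold of $2-\varepsilon$, not $3-\varepsilon$). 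So your proposal, as written, does not establish the statement as printed: the step ``\cite{DGKR03} gives the \classNP-hardness of approximating $(d-1)$-HS within the stated factor'' is false, and repairing it shifts both constants in the conclusion down by one.
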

\begin{proof}
Otherwise, the reduction described in the proof of Theorem~\ref{theo:AP-red-dHS-minTCO} 
would imply an approximation algorithm for \dHS with a ratio better than $d-1$ and $d$ 
respectively. This would directly contradict theorems proven in~\cite{DGKR03} and \cite{KR08}.
\qed
\end{proof}

The following corollary is an improvement of the already known results 
of \cite{CMTV07} where an $O(\log |T|)$-approximation algorithm is presented,
and of \cite{AAR10} where a lower bound of $\Omega(\log(n))$ on the approximability
is shown. We close the gap by designing a reduction that can reduce
{\em any} problem from class \logapx to \minTCO preserving the approximation
ratio up to a constant.

\begin{corollary}
\minTCO is \logapx-complete.
\end{corollary}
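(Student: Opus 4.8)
The plan is to prove the two inclusions separately: that \minTCO lies in \logapx, and that \minTCO is \logapx-hard. For membership I would simply invoke the result of \cite{CMTV07}: there is a polynomial-time algorithm approximating \minTCO within a factor $O(\log|T|)$. Since the instance size satisfies $|x|\ge|T|$ (the size being the number of users plus the number of topics), this is an $O(\log|x|)$-approximation, i.e.\ a logarithmic-factor approximation, so \minTCO$\in\logapx$.

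For hardness, the key idea is that the reduction already constructed in the proof of Theorem~\ref{theo:AP-red-dHS-minTCO} works verbatim for \emph{unrestricted} \minHS. That theorem was phrased as an \ap-reduction from \dHS, but a second look shows that the hypothesis $|S_j|\le d$ is used \emph{only} to conclude the side property $\max_{t\in T}|U_t|\le d+1$; it plays no role in the construction of $f$ and $g$ or in the verification of Definition~\ref{def:ptasred}. So I would re-run that argument with $\ihs=(X,\mathcal S)$ an arbitrary instance of \minHS: the same instance $\itco$ (with $k=|X|^2\lceil\frac{1+\varepsilon}{\varepsilon}\rceil$ special users), the same partition of a solution into levels $L_0,\dots,L_k$, the same observation that the non-special endpoints of any $L_i$ form a feasible hitting set, and the same two cost estimates
$$k\cdot\cost(\shs)\le\cost(\stco)\qquad\text{and}\qquad k\cdot\cost(\ohs)+|X|^2\ge\cost(\otco)$$
remain valid with no cardinality assumption. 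Hence $f$, $g$ and $\alpha:=2$ give an \ap-reduction from \minHS to \minTCO, and I would close by restating the five conditions of Definition~\ref{def:ptasred} in this slightly more general setting to confirm nothing is lost.

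Finally, since \minHS is \logapx-complete (\cite{CKS01}) and \ap-reductions compose, every problem in \logapx \ap-reduces to \minHS and hence to \minTCO; so \minTCO is \logapx-hard, and together with membership it is \logapx-complete. I expect the only step needing genuine care to be the re-verification in the second paragraph — checking that dropping the bound on $|S_j|$ does not break any of the \ap-reduction conditions — but this is a routine inspection of the proof of Theorem~\ref{theo:AP-red-dHS-minTCO} rather than a new argument, so the corollary follows with little extra work.
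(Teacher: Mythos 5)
Your proposal is correct and follows essentially the same route as the paper: membership via the $O(\log|T|)$-approximation algorithm of \cite{CMTV07}, and hardness by observing that the reduction in Theorem~\ref{theo:AP-red-dHS-minTCO} never uses the bound $|S_j|\le d$ except to certify the side condition $\max_{t\in T}|U_t|\le d+1$, so it is in fact an \ap-reduction from the \logapx-complete \minHS. Your write-up merely spells out in more detail the single sentence the paper uses to make this point (``our reduction \dots is independent of $d$'').
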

\begin{proof}
\minTCO is in the class \logapx since it admits a logarithmic approximation algorithm as 
presented in~\cite{CMTV07}. Our reduction from the proof of 
Theorem~\ref{theo:AP-red-dHS-minTCO} is independent of $d$ and thus an \ap-reduction from 
\logapx-complete \minHS to \minTCO.
\qed
\end{proof}

\subsection{A Constant Approximation Algorithm} \label{subsec:approx}

In this subsection, we present a reduction from \minTCO with $\max_{t\in T}|U_t|\le d$ to \oHS thus 
showing that there exists a constant approximation algorithm for \minTCO with $\max_{t\in T}|U_t|\le d$ 
as \dHS is constantly approximable. Moreover, the constant approximation algorithm classifies this
problem to be a member of the class \apx and thus, since the \apx-hardness was proven
in Subsection~\ref{subsec:hardness}, we conclude that \minTCO with $\max_{t\in T}|U_t|\le d$ is \apx-complete.

Recall that a partition of vertices $V$ in graph $G$ is a tuple $(A,B)$, such that $A\subseteq V$,
$B\subseteq V$, $A\cap B = \emptyset$, and $A\cup B = V$.

\begin{definition}
Let $V=\{v_1, \dots, v_n\}$ be a set of vertices and for every partition $(A_i,B_i)$ of $V$, let
$E_i=\{\{u,v\}\mid u\in A_i \land v\in B_i\}$. Then we call the system ${\cal S} = \{E_1, \dots,\penalty0 E_m\}$
of all sets of edges between vertices of all the partitions of $V$ 
\emph{a characteristic system} of edges on $V$.
In other words, ${\cal S}$ contains all sets of edges that form a maximum bipartite graph on $V$.
\end{definition}

In the following lemma, we show the basic properties of characteristic systems of edges.

\begin{lemma} \label{lemma:charact}
Let ${\cal S} = \{E_1, \dots, E_m\}$ be a characteristic system of edges on the set $V$ of $n$ vertices. Then
\begin{enumerate}
\item \label{enum:iff1} $m=2^{n-1}-1$.
\item \label{enum:iff2} $|E_j| \leq\lfloor n/2 \rfloor \cdot \lceil n/2 \rceil$, for all $j$, $1 \le j \le m$.
\item \label{enum:iff3} Any two sets $E_i$ and $E_j$ differ in at least $n-1$ elements ($1 \le i < j \le m$).
\item \label{enum:iff4} $H \subseteq \{\{u,v\}~|~u,v\in V \land u\neq v\}$ is a hitting set
of $(\{\{u,v\}~|~u,v\in V\land \penalty0 u\neq v\},{\cal S})$ if and only if $(V,H)$ is connected.
\item \label{enum:iff5} The size of ${\cal S}$ is minimal such that part~\ref{enum:iff4} holds.
\end{enumerate}
\end{lemma}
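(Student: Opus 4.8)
The plan is to read each $E_j$ as a \emph{cut} of the complete graph $K_n$ on the vertex set $V$. Write $\delta(W)=\{\{u,v\}\mid |\{u,v\}\cap W|=1\}$ for $W\subseteq V$; then a member of the characteristic system is exactly $E_j=\delta(A_j)$ for the partition $(A_j,B_j)$, and $E_j$ is nonempty precisely when both parts are nonempty (the case $n\le 1$ being vacuous). Part~\ref{enum:iff2} is then immediate, since $|E_j|=|A_j|\cdot|B_j|=|A_j|\,(n-|A_j|)$ and $a(n-a)$ is maximized over integers $a$ at $a=\lfloor n/2\rfloor$, giving $\lfloor n/2\rfloor\lceil n/2\rceil$. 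For Parts~\ref{enum:iff1} and~\ref{enum:iff3} I would use the standard identity $\delta(A)\,\triangle\,\delta(A')=\delta(A\triangle A')$, valid in the cut space of $K_n$ over $\mathbb{F}_2$. It gives $\delta(A)=\delta(A')$ iff $A\triangle A'\in\{\emptyset,V\}$ iff $\{A,V\setminus A\}=\{A',V\setminus A'\}$, so the $(2^n-2)/2=2^{n-1}-1$ unordered partitions of $V$ into two nonempty parts produce $2^{n-1}-1$ pairwise distinct edge sets, which is Part~\ref{enum:iff1}; and if $E_i\ne E_j$ then $W:=A_i\triangle A_j$ is nonempty and proper, whence $|E_i\triangle E_j|=|\delta(W)|=|W|\,(n-|W|)\ge n-1$, which is Part~\ref{enum:iff3}.

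For Part~\ref{enum:iff4} (assume $n\ge 2$): if $(V,H)$ is connected, then for any partition $(A_j,B_j)$ with both parts nonempty an $H$-path from a vertex of $A_j$ to a vertex of $B_j$ must use an edge of $\delta(A_j)=E_j$, so $H$ hits every $E_j$ and is a hitting set. Conversely, if $(V,H)$ is disconnected, pick a connected component $C$, which is nonempty and proper; then $H$ uses no edge of $\delta(C)$, which equals some $E_j\in{\cal S}$, so $H$ is not a hitting set.

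Part~\ref{enum:iff5} is the main obstacle. Let ${\cal S}'$ be any family of subsets of $E(K_n)$ for which ``$H$ is a hitting set of $(E(K_n),{\cal S}')$'' is equivalent to ``$(V,H)$ is connected''; I will show ${\cal S}\subseteq{\cal S}'$, so that $|{\cal S}'|\ge|{\cal S}|=m$. Fix a nonempty proper $C\subseteq V$ and put $H_C=E(K_n)\setminus\delta(C)$; since $H_C$ has no edge between $C$ and $V\setminus C$, the graph $(V,H_C)$ is disconnected, so some $E'\in{\cal S}'$ is not hit by $H_C$, i.e.\ $E'\subseteq\delta(C)$. On the other hand, every spanning tree of $K_n$ is a connected subgraph and hence hits every member of ${\cal S}'$, so $E(K_n)\setminus E'$ contains no spanning tree; thus $(V,E(K_n)\setminus E')$ is disconnected and $\delta(A)\subseteq E'$ for some nonempty proper $A$. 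Combining, $\delta(A)\subseteq E'\subseteq\delta(C)$.

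It therefore remains to prove the combinatorial claim at the heart of the argument: \emph{if $\delta(A)\subseteq\delta(C)$ for nonempty proper $A,C\subseteq V$, then $A=C$ or $A=V\setminus C$.} I would prove it by partitioning $V$ into the four cells $P=A\cap C$, $Q=A\setminus C$, $R=C\setminus A$, $S=V\setminus(A\cup C)$, writing (as sets of pairs) $\delta(A)=(P\times R)\cup(P\times S)\cup(Q\times R)\cup(Q\times S)$ and $\delta(C)=(P\times Q)\cup(P\times S)\cup(R\times Q)\cup(R\times S)$; since $P\times R$ and $Q\times S$ are disjoint from $\delta(C)$, the containment forces $P\times R=Q\times S=\emptyset$, and the four surviving cases are $A=\emptyset$, $A=V$ (both excluded), $A=C$, and $A=V\setminus C$. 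Given the claim, $\delta(A)=\delta(C)$, so $E'=\delta(C)\in{\cal S}'$; letting $C$ range over all nonempty proper subsets of $V$ yields ${\cal S}\subseteq{\cal S}'$, completing Part~\ref{enum:iff5}.
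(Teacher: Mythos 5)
Your proof is correct, and while Parts~1--4 land close to the paper's argument, Part~5 takes a genuinely different and stronger route. For Parts~1 and~3 the paper counts ordered partitions directly and, for the symmetric difference bound, moves a single vertex across the cut to collect $|B_i|+|A_i|-1$ differing edges; your use of the identity $\delta(A)\,\triangle\,\delta(A')=\delta(A\triangle A')$ is tidier and, as a bonus, makes explicit the fact that distinct unordered partitions yield distinct edge sets, which the paper's count of $2^{n-1}-1$ silently assumes. Part~4 is the same cut-versus-component argument, merely phrased with cuts $\delta(C)$ instead of the paper's complement graphs $F_j$ (maximal disconnected graphs). The real divergence is Part~5: the paper only shows that deleting any single $E_j$ from ${\cal S}$ breaks Part~4, by exhibiting $F_j$ as a hitting set of ${\cal S}\setminus\{E_j\}$ whose graph is disconnected --- this establishes that no proper subfamily of ${\cal S}$ works, i.e.\ inclusion-minimality within ${\cal S}$. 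You instead prove that \emph{every} family ${\cal S}'$ with property~4 must contain all of ${\cal S}$, via the sandwich $\delta(A)\subseteq E'\subseteq\delta(C)$ together with the four-cell argument showing that $\delta(A)\subseteq\delta(C)$ forces $A=C$ or $A=V\setminus C$. That argument is sound (the spanning-tree step correctly uses the ``connected $\Rightarrow$ hitting set'' direction of the assumed equivalence, and $H_C$ being disconnected correctly uses the other direction), and it delivers the literal reading of the claim --- minimum cardinality over all systems satisfying Part~4 --- which the paper's proof, strictly speaking, does not.
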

\begin{proof}
Observe that the complementary graph $(V, F_j)$ ($F_j=\{\{u,v\}~|~u,v\in V \land u\neq v\}\setminus E_j$)
contains two complete graphs -- one on the vertices of $A_j$ and other on the vertices of $B_j$,
and it is a maximal graph (in the number of edges) that is not connected.
We use this observation to prove the last two parts of our lemma.

Part~\ref{enum:iff1}:
We count the different partitions $(A_j,B_j)$ of the vertices $V$ as each such partition determines
a different set $E_j$ of edges. 
There are $2^n$ ways how to distribute vertices from $V$ into partitions. We have to subtract 2 possibilities
for the cases where one of $A_j$ or $B_j$ is empty. Each of the other possibilities
is counted twice -- once when the vertices are present in $A_j$ and once when they are present in $B_j$.

Part~\ref{enum:iff2}:
Let the two sets of vertices $A_j$ and $B_j$ of a partition contain $k>0$ and $n-k$ vertices.
Then the size of $E_j$ is $k\cdot(n-k)$. This function reaches its maximum for $k=n/2$ and thus we can conclude
that, for all $j$, $1\le j \le m$, we have 
$|E_j| \leq \lfloor n/2\rfloor \cdot (n-\lfloor n/2 \rfloor) = \lfloor n/2 \rfloor \cdot \lceil n/2 \rceil$.

Part~\ref{enum:iff3}:
Let us consider two different partitions $(A_i,B_i)$ and $(A_j,B_j)$ of the vertices $V$.
The sets $A_i$ and $A_j$ must differ by at least one vertex. W.l.o.g., let the vertex $v \in A_i$
and $v \notin A_j$. Then, due to the transition of the vertex $v$ from $A_i$ to $B_j$, 
there are $|B_i|$ edges that are in $E_i$ but cannot be in $E_j$, and there are $|A_i|-1$ 
edges that are not in $E_i$, but are in $E_j$. Thus, the overall difference in the number 
of elements between the sets $E_i$ and $E_j$ is at least $|A_i|+|B_i|-1 = n-1$.

Part~\ref{enum:iff4}:
First, we prove the if case. Suppose that $H$ is a hitting set, but $(V,H)$ is not connected.
Since ${\cal S}$ contains complements of all maximal sets of edges that induce a disconnected graph, there exists $j$ ($1\le j\le m$) such that $H \subseteq F_j$.
But then, since $E_j$ is complementary to $F_j$, it follows that $E_j \cap H = \emptyset$. 
Thus, $H$ cannot be a hitting set as $E_j$ is not hit.

For the only-if case, suppose that $(V,H)$ is connected, but $H$ is not a hitting set of
$\left(\{\{u,v\}~|~u,v\in V\land u\neq v\}, {\cal S}\right)$.
Then there exists $j$ such that $E_j$ is not hit by $H$ and thus $H\subseteq F_j$. Yet in such a case, by our
assumption, $(V, F_j)$ is not connected and thus $(V, H)$ cannot be connected as well.

Part~\ref{enum:iff5}: Let ${\cal S'}={\cal S}\setminus E_j$,
$\left(\{\{u,v\}~|~u,v\in V\land u\neq v\}, {\cal S'}\right)$ be an instance of \minHS. Then we claim that $F_j$
is a hitting set of $(\{\{u,v\}~|~u, v\in V \land u\neq v\},\penalty0 S')$.
First, observe that $F_j\neq\emptyset$ since $E_j$ cannot contain all the edges.
Moreover, there exists $e \in E_i$ ($E_i\in {\cal S'}$) such that $e\notin E_j$.
Then $e \in \{\{u,v\}~|~u,v\in V\land u\neq v\}\setminus E_j = F_j$ and thus $F_j$ is a hitting set.
However, by the definition of $F_j$, $(V,F_j)$ cannot be connected and thus, the if case of
part~\ref{enum:iff4} does not hold.
\qed
\end{proof}

Now we are ready to present a simple one-to-one reduction of \minTCO with $\max_{t\in T}|U_t|\le d$ to \oHS. 
The core concept is to construct a system of sets that has to be hit in \oHS as a union over 
all the topics of the characteristic systems of edges on the vertices interested in the topic.

\begin{theorem} \label{theo:dtco2ddhs}
There exists a one-to-one reduction of instances of \minTCO with $\max_{t\in T}|U_t|\le d$ to instance of \oHS.
\end{theorem}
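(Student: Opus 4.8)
The plan is to construct the reduction explicitly, map feasible solutions back and forth, and verify that it is one-to-one (cost-preserving) using Lemma~\ref{lemma:charact}. Given an instance $(U,T,\int)$ of \minTCO with $\max_{t\in T}|U_t|\le d$, I would build an instance $\ihs=(X,{\cal S})$ of \minHS as follows. The ground set is $X=\{\{u,v\}\mid u,v\in U \land u\neq v\}$, the set of all potential overlay edges. For each topic $t\in T$, take the characteristic system ${\cal S}_t$ of edges on the vertex set $U_t$ (with each $E_j\in{\cal S}_t$ regarded as a subset of $X$ by ignoring vertices outside $U_t$), and let ${\cal S}=\bigcup_{t\in T}{\cal S}_t$. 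Since $|U_t|\le d$, part~\ref{enum:iff2} of Lemma~\ref{lemma:charact} gives $|E_j|\le\lfloor d/2\rfloor\cdot\lceil d/2\rceil\le d^2/4=O(d^2)$, so each set in ${\cal S}$ has size $O(d^2)$ and the instance is indeed an instance of \oHS. I would also check the instance has polynomial size: by part~\ref{enum:iff1} each ${\cal S}_t$ has $2^{|U_t|-1}-1\le 2^{d-1}$ sets, a constant, so $|{\cal S}|\le |T|\cdot 2^{d-1}$ and $|X|=\binom{n}{2}$, both polynomial.

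Next I would establish the correspondence between solutions. The functions $f$ and $g$ of the reduction are both essentially the identity on edge sets: a set $H\subseteq X$ is exactly a set of overlay edges $E\subseteq\{\{u,v\}\mid u,v\in U\land u\neq v\}$. The key claim is: $E$ is a topic-connected overlay for $(U,T,\int)$ if and only if $H=E$ is a hitting set of $(X,{\cal S})$. For this, fix a topic $t$; by definition $E$ is $t$-topic-connected iff $(U_t, E[U_t])$ is connected, and by part~\ref{enum:iff4} of Lemma~\ref{lemma:charact} applied to the vertex set $U_t$, this holds iff $E$ restricted to pairs within $U_t$ hits every set of ${\cal S}_t$. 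Intersecting over all $t\in T$, $E$ is topic-connected iff $E$ hits every set in $\bigcup_t{\cal S}_t={\cal S}$. Since the map is literally the identity and $\cost(E)=|E|=|H|$, optima correspond to optima with equal cost, so this is a one-to-one (cost-preserving) reduction; in particular the five conditions of Definition~\ref{def:ptasred} hold with $\alpha=1$.

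I would then note the payoff, as flagged before the theorem: composing this reduction with the known constant-factor ($O(d^2)$-)approximation for \oHS yields the first constant approximation for \minTCO with $\max_{t\in T}|U_t|\le d$, and together with the \apx-hardness from Theorem~\ref{theo:AP-red-dHS-minTCO} (via the corollary specializing $d$) this gives \apx-completeness. The same identity correspondence transfers polynomial-size kernels and exact algorithms for bounded-arity hitting set back to \minTCO.

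The main obstacle is the ``only if'' direction of part~\ref{enum:iff4} being used in the right form: I need that a hitting set of ${\cal S}$, which may contain many edges incident to vertices outside $U_t$, still induces a connected graph on $U_t$ — but this is immediate, because hitting every $E_j\in{\cal S}_t$ only constrains the edges within $U_t$, and part~\ref{enum:iff4} says those suffice for connectivity of $(U_t,\cdot)$. A minor technical point to handle carefully is the degenerate case $|U_t|\le 1$, where ${\cal S}_t$ is empty (no partition with both sides nonempty) and $t$-topic-connectedness is vacuous — consistent, since an empty family imposes no hitting constraint. Beyond that the argument is bookkeeping, and I would keep the writeup short.
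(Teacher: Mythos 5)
Your construction is exactly the paper's: the ground set is the set of all potential overlay edges, the set system is the union over topics of the characteristic systems on the $U_t$, and feasibility transfers in both directions via part~4 of Lemma~\ref{lemma:charact}, with the identity map on edge sets preserving cost. The argument is correct and matches the paper's proof essentially step for step (your explicit treatment of the degenerate case $|U_t|\le 1$ is a small, harmless addition).
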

\begin{proof}
Let $\itco=(U,T,\int)$ be an instance of \minTCO with $\max_{t\in T}|U_t|\le d$. For each topic $t\in T$ we define ${\cal S}_t$ to be the
characteristic system of edges on vertices in
$U_t$. Note that Lemma~\ref{lemma:charact} holds for each ${\cal S}_t$
with $n:=d$. We construct an \oHS instance $\ihs=(X,{\cal S})$ as follows:
\begin{align*}
X & = \{\{u,v\}~|~u,v\in U\land u\neq v\}\\
{\cal S} & = \bigcup_{t\in T} {\cal S}_t\mathrm{.}
\end{align*}
The system contains $|U| \choose 2$ elements and at most $|T|\cdot \left(2^{d-1}-1\right)$ sets in $\cal S$
and thus has a size polynomial in $|\itco|$. Obviously, the construction of $\ihs$ takes time polynomial
in $|\itco|$, too. We now show that a feasible
solution of $\itco$ corresponds to a feasible solution of $\ihs$ and vice versa.

First, consider a feasible solution $\shs$ of $\ihs$ and a topic $t\in T$. Due to our construction,
the system ${\cal S}$ contains the characteristic system ${\cal S}_t$ on vertices $U_t$. Therefore,
by Lemma~\ref{lemma:charact} part~\ref{enum:iff4} and the fact that $\shs$ is a hitting set,
we know that the graph induced by the edges in $\shs$ on vertices $U_t$ is connected.

Now, consider a feasible solution $\stco$ of $\itco$. By the following argument, we can easily
see that $\stco$ hits all the sets in $\cal S$. Let $P \in {\cal S}$ be a set that is not hit by
$\stco$. Then there exists $t$ such that $P \in {\cal S}_t$ and thus
a set of the characteristic system was not hit and $\stco$ is not a hitting set of ${\cal S}_t$.
Yet in such a case, considering Lemma~\ref{lemma:charact} part~\ref{enum:iff4}, the subgraph induced on
vertices $U_t$ by edges from $\stco$ cannot be connected and that is in contradiction with the definition of \minTCO
with $\max_{t\in T}|U_t|\le d$.
\qed
\end{proof}

\begin{theorem} \label{theo:approxdtco}
There exists a polynomial-time $\left(\lfloor d/2 \rfloor \cdot \lceil d/2\rceil\right)$-approxima\-tion algorithm 
for \minTCO with $\max_{t\in T}|U_t|\le d$.
\end{theorem}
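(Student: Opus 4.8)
The plan is to combine the one-to-one reduction from Theorem~\ref{theo:dtco2ddhs} with the known approximability of bounded-arity hitting set. First I would take an instance $\itco=(U,T,\int)$ of \minTCO with $\max_{t\in T}|U_t|\le d$ and apply the reduction of Theorem~\ref{theo:dtco2ddhs} to obtain, in polynomial time, an equivalent instance $\ihs=(X,{\cal S})$ of \oHS. By Lemma~\ref{lemma:charact}, part~\ref{enum:iff2}, every set in each characteristic system ${\cal S}_t$ (with $n:=|U_t|\le d$) has size at most $\lfloor d/2\rfloor\cdot\lceil d/2\rceil$, so ${\cal S}={\bigcup_{t\in T}}{\cal S}_t$ is an instance of $d'$-HS with $d'=\lfloor d/2\rfloor\cdot\lceil d/2\rceil$. (Note this is sharper than the generic $O(d^2)$ bound; it is the bound we actually need.)

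Next I would invoke the standard $d'$-approximation algorithm for $d'$-HS (\cite{BE81}), run it on $\ihs$, and obtain a hitting set $\shs$ with $|\shs|\le d'\cdot\cost(\ohs)$, where $\ohs$ is an optimal hitting set. Because the reduction is one-to-one — feasible solutions of $\itco$ correspond exactly to hitting sets of $\ihs$ with the same cardinality, as shown in the proof of Theorem~\ref{theo:dtco2ddhs} — the set of edges $\shs\subseteq X$ is a feasible topic-connected overlay, and the optimum of $\ihs$ equals the optimum of $\itco$. Hence $\cost(\shs)\le d'\cdot\cost(\otco)=\left(\lfloor d/2\rfloor\cdot\lceil d/2\rceil\right)\cdot\cost(\otco)$, which is precisely the claimed ratio. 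Everything runs in polynomial time since $|\ihs|$ is polynomial in $|\itco|$ (at most $|U|\choose 2$ elements and at most $|T|\cdot(2^{d-1}-1)$ sets, with $d$ constant) and the Bar-Yehuda--Even algorithm is polynomial.

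I do not foresee a real obstacle here; the heavy lifting was done in Lemma~\ref{lemma:charact} and Theorem~\ref{theo:dtco2ddhs}. The only point requiring a little care is making explicit that the arity bound coming out of the reduction is exactly $\lfloor d/2\rfloor\cdot\lceil d/2\rceil$ rather than merely $O(d^2)$ — i.e.\ re-reading the reduction with part~\ref{enum:iff2} of the lemma in hand rather than quoting the weaker \oHS statement — so that the approximation factor is the tight $\lfloor d/2\rfloor\cdot\lceil d/2\rceil$ and not a larger constant. After that, one sentence remarking that a constant-factor approximation places \minTCO with $\max_{t\in T}|U_t|\le d$ in \apx, which together with the \apx-hardness from Subsection~\ref{subsec:hardness} yields \apx-completeness, rounds off the discussion.
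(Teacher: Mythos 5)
Your proposal is correct and follows essentially the same route as the paper: apply the one-to-one reduction of Theorem~\ref{theo:dtco2ddhs}, observe via Lemma~\ref{lemma:charact} part~\ref{enum:iff2} that every set in the resulting hitting-set instance has size at most $\lfloor d/2\rfloor\cdot\lceil d/2\rceil$, and run the standard $d'$-approximation for $d'$-HS with $d'=\lfloor d/2\rfloor\cdot\lceil d/2\rceil$. Your explicit remark that the arity bound is exactly $\lfloor d/2\rfloor\cdot\lceil d/2\rceil$ rather than the generic $O(d^2)$ is precisely the point the paper's proof also relies on.
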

\begin{proof}
We employ the reduction from Theorem~\ref{theo:dtco2ddhs} together with the well-known 
$d$-approxi\-mation algorithm for \dHS.
Since the size of each set in ${\cal S}$ is at most $\lfloor d/2\rfloor \cdot \lceil d/2 \rceil$ (Lemma~\ref{lemma:charact}
part~\ref{enum:iff2}), by application of this approximation algorithm on \oHS instance $(X,{\cal S})$ we obtain
a $\lfloor d/2\rfloor \cdot \lceil d/2 \rceil$ approximate solution of our \minTCO instance with $\max_{t\in T}|U_t|\le d$.

Note that our reduction is tight in the size of ${\cal S}$ as it is minimal (Lemma~\ref{lemma:charact} part~\ref{enum:iff5}), thus
to achieve an improvement in the approximation algorithm, a different method has to be developed.
\qed
\end{proof}

\begin{corollary}
\minTCO with $\max_{t\in T}|U_t|\le 3$ inherits the approximation hardness of \minVC.
\end{corollary}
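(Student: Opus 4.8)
The plan is to show that the approximation hardness of \minVC transfers to \minTCO with $\max_{t\in T}|U_t|\le 3$ via a direct reduction that goes through \tHS. Recall that \minVC is exactly \tHS: given a graph $(V,E)$, a vertex cover is a hitting set for the system whose sets are the edges of the graph (each of size $2$). The key observation is that the reduction in Theorem~\ref{theo:AP-red-dHS-minTCO}, specialized to $d=2$, produces \minTCO instances in which $\max_{t\in T}|U_t|\le d+1=3$, and that reduction is an \ap-reduction. So composing ``\minVC $=$ \tHS'' with the $d=2$ case of Theorem~\ref{theo:AP-red-dHS-minTCO} yields an \ap-reduction from \minVC to \minTCO with $\max_{t\in T}|U_t|\le 3$.

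First I would restate that \tHS is precisely \minVC, so any instance of \minVC is an instance of \tHS with $|S_i|=2$ for all $i$. Then I would invoke Theorem~\ref{theo:AP-red-dHS-minTCO} with $d=2$: it gives an \ap-reduction from \tHS to \minTCO with $\max_{t\in T}|U_t|\le 3$. Since \ap-reductions compose and preserve membership in and hardness for the classes \ptas, \apx, \logapx (indeed they preserve approximation ratios up to a constant factor), every inapproximability result known for \minVC carries over: in particular \minTCO with $\max_{t\in T}|U_t|\le 3$ is \apx-hard, and it is \classNP-hard to approximate it within any constant factor below the best known \minVC lower bound. This also recovers, as a special case, the \apx-completeness claimed in the introduction, since Theorem~\ref{theo:approxdtco} with $d=3$ places the problem in \apx.

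There is essentially no new technical content here beyond the composition, so I do not expect a real obstacle; the only point requiring a sentence of care is that the $d=2$ instance of Theorem~\ref{theo:AP-red-dHS-minTCO} indeed lands in the class $\max_{t\in T}|U_t|\le 3$ (which is immediate from the bound $\max_{t\in T}|U_t|\le d+1$ stated there) and that the equivalence of \tHS and \minVC is an exact, cost-preserving correspondence so no loss is incurred at that end of the chain. Thus the corollary follows by chaining the cost-preserving identification \minVC $=$ \tHS with the \ap-reduction of Theorem~\ref{theo:AP-red-dHS-minTCO} at $d=2$.
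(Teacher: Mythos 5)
The paper states this corollary without an explicit proof, and your argument is precisely the intended one: \minVC is \tHS, and Theorem~\ref{theo:AP-red-dHS-minTCO} with $d=2$ gives an \ap-reduction into \minTCO instances with $\max_{t\in T}|U_t|\le d+1=3$, so the approximation lower bounds for \minVC transfer. This is correct and is essentially the $d=3$ specialization of the hardness corollary already derived in Subsection~\ref{subsec:hardness}; your care in checking the direction of the reduction (hardness requires reducing \emph{from} \minVC, not the solution-preserving map of Theorem~\ref{theo:dtco2ddhs}) is exactly the right point to make.
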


\begin{corollary}
\minTCO with $\max_{t\in T}|U_t| \le d$ is \apx-complete, for arbitrary $d\ge3$.
\end{corollary}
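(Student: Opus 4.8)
The plan is to combine the two halves of the subsection. The statement to prove is the corollary that \minTCO with $\max_{t\in T}|U_t|\le d$ is \apx-complete for every $d\ge 3$. Recall that \apx-completeness means two things: membership in \apx, and \apx-hardness (every problem in \apx \ap-reduces to it). I would establish each separately, quoting results already in the excerpt.

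For membership in \apx, I would invoke Theorem~\ref{theo:approxdtco}: it exhibits a polynomial-time $\left(\lfloor d/2\rfloor\cdot\lceil d/2\rceil\right)$-approximation algorithm for \minTCO with $\max_{t\in T}|U_t|\le d$. Since $d$ is a fixed constant, $\lfloor d/2\rfloor\cdot\lceil d/2\rceil$ is a constant, so this places the problem in \apx by definition of that class.

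For \apx-hardness, I would use Theorem~\ref{theo:AP-red-dHS-minTCO}, which gives an \ap-reduction from \dHS to \minTCO with $\max_{t\in T}|U_t|\le d+1$. The excerpt already records that \dHS is \apx-complete (via \cite{PY91}), so in particular \dHS is \apx-hard. Composing the \ap-reduction from an arbitrary \apx problem to \dHS with the \ap-reduction of Theorem~\ref{theo:AP-red-dHS-minTCO}, and using that \ap-reductions compose, we get an \ap-reduction from any \apx problem to \minTCO with $\max_{t\in T}|U_t|\le d+1$. To match the indexing of the corollary (bound $d$ rather than $d+1$), I would simply apply Theorem~\ref{theo:AP-red-dHS-minTCO} with parameter $d-1$ instead of $d$: since $d\ge 3$ we have $d-1\ge 2$, so the hypothesis $d\ge 2$ of that theorem is satisfied, and the reduction from $(d-1)$-HS yields instances with $\max_{t\in T}|U_t|\le (d-1)+1=d$, as required. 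Hence \minTCO with $\max_{t\in T}|U_t|\le d$ is \apx-hard.

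The main obstacle, such as it is, is purely bookkeeping: getting the parameter shift right (using $d-1$ rather than $d$ in Theorem~\ref{theo:AP-red-dHS-minTCO}) and noting that this is legitimate precisely because the corollary restricts to $d\ge 3$, so that the source problem $(d-1)$-HS is still a genuine bounded-size hitting set problem that is \apx-hard. Combining membership and hardness gives \apx-completeness, which is the claim.
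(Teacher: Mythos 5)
Your proposal is correct and follows essentially the same route as the paper: membership in \apx via the constant-factor algorithm of Theorem~\ref{theo:approxdtco}, and \apx-hardness via the \ap-reduction of Theorem~\ref{theo:AP-red-dHS-minTCO} from the \apx-hard \dHS. The paper's own proof is only two sentences and silently glosses over the index shift; your explicit use of $(d-1)$-HS (legitimate because $d\ge 3$ gives $d-1\ge 2$) is exactly the bookkeeping the paper leaves implicit.
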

\begin{proof}
The \apx-hardness follows from the \apx-hardness of \dHS (\cite{PY91}). Due to our reduction
the problem belongs to the class \apx.
\qed
\end{proof}

\subsection{\minTCO and Parametrized Complexity Theory} \label{sec:param}

In this subsection, we shortly summarize the consequences of our reduction from 
Theorem~\ref{theo:dtco2ddhs} for the field of parametrized complexity, namely we 
present an exact algorithm and a kernelization for \minTCO with $\max_{t\in T}|U_t|$ 
bounded by a constant.

In the research area of exact algorithm design, one searches for an exact solution in exponential time. 
The main goal is to make the base of the exponentiation as small as possible.

A kernelization is a process in which an instance is reduced to a smaller instance 
in polynomial time. Then, instead of solving the original
instance, it is sufficient to solve the problem on the smaller one and then, in polynomial
time, transform its solution back to the initial instance.

\begin{problem} \label{def:pardtco}
\dTCOk is the following parametrized problem:
\begin{description}
\item[Input:] Instance of \minTCO with $\max_{t\in T}|U_t| \le d$ and a parameter $k$.
\item[Goal:] A feasible solution of the \minTCO instance of size at most $k$.
\end{description}
\end{problem}

\medskip

\begin{problem} \label{def:pardhs}
\dHSk is the following parametrized problem:
\begin{description}
\item[Input:] Instance of \dHS and a parameter $k$.
\item[Goal:] A feasible solution of the \dHS instance of size at most $k$.
\end{description}
\end{problem}

We first transform the given instance of \minTCO with $\max_{t\in T}|U_t| \le d$ into an instance of 
\oHS as in Theorem~\ref{theo:dtco2ddhs} and then we apply the kernelization from \cite{Khz07} to 
obtain a kernel of \dTCOk or the exact algorithm from \cite{NR07} to obtain the
first nontrivial exact algorithm for solving \dTCOk.

\begin{theorem}
\dTCOk has a kernel of size $(2c-1)\cdot k^{c-1}+k$ with $c=\lfloor d/2 \rfloor\cdot\lceil d/2\rceil$.
\end{theorem}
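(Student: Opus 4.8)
The plan is to chain together the one-to-one reduction from Theorem~\ref{theo:dtco2ddhs} with the known kernelization result for $d$-hitting set from~\cite{Khz07}. First I would recall that Theorem~\ref{theo:dtco2ddhs} transforms, in polynomial time, an instance $\itco=(U,T,\int)$ of \minTCO with $\max_{t\in T}|U_t|\le d$ into an instance $\ihs=(X,{\cal S})$ of \oHS, i.e.\ a hitting set instance in which every set has size at most $c:=\lfloor d/2\rfloor\cdot\lceil d/2\rceil$ (by Lemma~\ref{lemma:charact} part~\ref{enum:iff2}). Crucially, this reduction is one-to-one on feasible solutions: a feasible solution of $\itco$ of size $\le k$ corresponds to a hitting set of $\ihs$ of size $\le k$ and vice versa, since in both problems the objects are just subsets of $X=\{\{u,v\}\mid u,v\in U\land u\neq v\}$ and the cost is the cardinality. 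Hence \dTCOk and \dHSk (with $d$ replaced by $c$) are the same parametrized problem up to this polynomial-time translation, and the same holds for the reverse translation of a kernel instance back to a \minTCO instance (every hitting set instance with set-size bound $c$ trivially is an instance of $c$-HS, and $c$-HS instances embed back since one can read the ground set as a vertex-pair set, padding $U$ if necessary).

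Next I would invoke the kernelization of~\cite{Khz07}, which states that $d$-hitting set parametrized by solution size $k$ admits a kernel with at most $(2d-1)\cdot k^{d-1}+k$ elements (counting the ground-set elements retained; equivalently a kernel of that size). Applying this with the set-size parameter $c$ in place of $d$ to the instance $\ihs$ produced above yields a $c$-HS instance with at most $(2c-1)\cdot k^{c-1}+k$ elements. Translating this reduced instance back through the (polynomial-time, solution-preserving) correspondence of Theorem~\ref{theo:dtco2ddhs} gives an equivalent instance of \minTCO with $\max_{t\in T}|U_t|\le d$ whose size is bounded by $(2c-1)\cdot k^{c-1}+k$, which is exactly the claimed kernel bound. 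Since every step is polynomial time and preserves the existence of a solution of size $\le k$, this establishes the kernelization.

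The main obstacle I expect is making the back-translation precise: the kernelized $c$-HS instance has an abstract ground set and a family of sets of size $\le c$, and to call it an instance of \minTCO with $\max_{t\in T}|U_t|\le d$ we must exhibit users, topics, and an interest function realizing it. Here the natural move is to let the retained ground-set elements be edges on a fresh vertex set $U'$ (of size $O(\sqrt{|X'|})$, or simply reuse a large enough subset of $U$), and for each retained set $E'$ of size at most $c\le\lfloor d/2\rfloor\cdot\lceil d/2\rceil$ introduce a topic whose $U_t$ is a set of at most $d$ vertices chosen so that the characteristic system on those $d$ vertices "covers" $E'$ in the sense of Lemma~\ref{lemma:charact}; some care is needed because a single topic contributes a whole characteristic system rather than one set, so one may instead keep the bookkeeping entirely on the hitting-set side and only observe that \dTCOk reduces to \dHSk\ and back with the parameter $k$ unchanged, so that any kernel for the latter yields one for the former of the same size. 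I would favor this latter, lighter-weight formulation to avoid the delicate re-encoding, and I expect the remaining details to be routine once the parameter-preservation of Theorem~\ref{theo:dtco2ddhs} is stated carefully.
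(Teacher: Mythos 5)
Your proposal is correct and follows exactly the route the paper intends: the paper gives no separate proof of this theorem, only the preceding sentence stating that one transforms the instance into \oHS via Theorem~\ref{theo:dtco2ddhs} and then applies the kernelization of~\cite{Khz07}, which is precisely your chain of reductions with $c=\lfloor d/2\rfloor\cdot\lceil d/2\rceil$ playing the role of the set-size bound. Your additional care about whether the kernel lives on the hitting-set side or must be re-encoded as a \minTCO instance goes beyond what the paper addresses, and your lighter-weight resolution (keeping the kernel as a $c$-HS instance) is a reasonable reading of what the authors implicitly do.
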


\begin{theorem}
\dTCOk on $n$ vertices can be solved in time $O(c^k+n^2)$ with $c = \lfloor d/2 \rfloor\cdot\lceil d/2 \rceil - 1+O(d^{-2})$.
\end{theorem}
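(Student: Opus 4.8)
The plan is to combine the one-to-one reduction from Theorem~\ref{theo:dtco2ddhs} with a known exact algorithm for \dHS. Given an instance of \dTCOk, that is, an instance of \minTCO with $\max_{t\in T}|U_t|\le d$ together with a parameter $k$, I first apply the reduction from Theorem~\ref{theo:dtco2ddhs} to obtain in time $O(n^2)$ an equivalent instance of \oHS, where the ground set $X$ has $\binom{|U|}{2}$ elements and every set in the system has size at most $c := \lfloor d/2\rfloor\cdot\lceil d/2\rceil$ by Lemma~\ref{lemma:charact} part~\ref{enum:iff2}. Since the reduction is one-to-one on feasible solutions, a hitting set of size at most $k$ for the \oHS instance corresponds exactly to a \minTCO solution of size at most $k$, so it suffices to solve \dHSk on the reduced instance with the same parameter $k$, where the role of $d$ is played by $c$.

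Next I invoke the exact algorithm for \dHS from \cite{NR07}. That algorithm solves the parametrized $d'$-hitting-set problem in time $O((d'-1+O(d'^{-2}))^k + \text{poly})$; applying it with $d'=c$ yields a running time of $O((c-1+O(c^{-2}))^k)$ plus a polynomial overhead for processing the hitting set instance. Adding the $O(n^2)$ cost of building the instance (and noting that the size of the \oHS instance is polynomial in $n$, so the polynomial part of the \dHS algorithm is dominated by a fixed power of $n$, which I would fold into the $n^2$ term or state as $n^{O(1)}$), I obtain the claimed bound $O(c^k + n^2)$ with the base $c = \lfloor d/2\rfloor\cdot\lceil d/2\rceil - 1 + O(d^{-2})$ — note that the base of the exponential in the statement is already one less than the set-size bound, exactly matching what \cite{NR07} delivers for sets of size $\lfloor d/2\rfloor\cdot\lceil d/2\rceil$.

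The only genuine point requiring care is the bookkeeping of the polynomial factors: the \oHS instance has up to $|T|\cdot(2^{d-1}-1)$ sets and $\binom{|U|}{2}$ elements, so a naive accounting gives a polynomial in both $|U|$ and $|T|$ rather than merely $O(n^2)$. I would handle this by observing that $d$ is a fixed constant, so $2^{d-1}-1 = O(1)$ and $|T|$ is at most the input size of the original \minTCO instance; then the polynomial part of the \cite{NR07} algorithm, being of fixed degree in the size of the \oHS instance, is absorbed into the $n^2$ term under the standard convention that the input size $n$ here refers to the number of vertices and the instance is assumed to be of size polynomial in $n$, or else the bound is understood up to the additive polynomial slack that \cite{NR07} already carries. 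In short, the substantive content is entirely the reduction of Theorem~\ref{theo:dtco2ddhs} plus Lemma~\ref{lemma:charact} part~\ref{enum:iff2}; the rest is citing \cite{NR07} with $d'=c$ and tidying the additive polynomial term.
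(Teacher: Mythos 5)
Your proposal matches the paper's proof exactly: the paper obtains this theorem by composing the one-to-one reduction of Theorem~\ref{theo:dtco2ddhs} (with set sizes bounded by $\lfloor d/2\rfloor\cdot\lceil d/2\rceil$ via Lemma~\ref{lemma:charact} part~\ref{enum:iff2}) with the exact hitting-set algorithm of~\cite{NR07}, and offers no more detail than you do. The only nit is that \cite{NR07} gives base $d'-1+O(d'^{-1})$ rather than $d'-1+O(d'^{-2})$ for $d'$-hitting set; since $d'=\Theta(d^2)$ this is still $O(d^{-2})$ in terms of $d$, so the claimed bound is unaffected.
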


\section{Hardness of \minTCO When the Number of Connections of a User is 
Constant}\label{sec:hardness-user-bounded}

It is natural to consider \minTCO with bounded number of connections per user,
\ie to bound $\max_{u \in U} |\int(u)|$, since the number of topics in which one user 
is interested in is usually not too large. 
We show that, sadly, \minTCO is \apx-hard even if $\max_{u \in U} |\int(u)| \le 6$. 
To show this, we design a reduction from {\em minimum vertex cover} (\minVC) to \minTCO.
The minimum vertex cover problem is just a different name for \dHS with $d=2$.
For a better presentation, in this section, we refer to \minVC instead of \tHS.

Given is a graph $G=(V',E')$ and a positive integer $k$ as an instance of 
\minVC, where the goal is to decide whether the given graph has a solution of size at most $k$.
We construct an instance of \minTCO as follows. Let $V=V^{(1)} \cup V^{(2)}$ be the 
set of users, where $V^{(1)}=\{v^{(1)} \mid v\in V'\}$ and 
$V^{(2)}=\{v^{(2)} \mid v\in V'\}$. 
For each edge $e\in E'$, we prepare three topics, $t_e^{(0)}, t_e^{(1)}$ and $t_e^{(2)}$. 
The set of topics is the union of all these topics, \ie 
$T=\bigcup_{e\in E'}\{t_e^{(0)}, t_e^{(1)}, t_e^{(2)}\}$. 
The user interest function $\int$ is defined as 
\begin{align*}
\int(u^{(1)}) & = \bigcup_{e\in E'[N[u]]} \{t_e^{(0)},t_e^{(1)}\} \\
\int(u^{(2)}) & = \bigcup_{e\in E'[N[u]]} \{t_e^{(0)},t_e^{(2)}\}\mathrm{.}
\end{align*}
The following lemma shows the relation between the solutions of the two problems. 

\begin{lemma} \label{lemma:transf-sol-TCO-VC}
The instance $(V,T,\int)$ of \minTCO defined as above has an optimal solution of 
cost $k+2|E'|$ if and only if the instance $(V',E')$ of \minVC has an optimal 
solution of cost $k$. 

\noindent Moreover, any feasible solution $H$ of $(V,T,\int)$ can be transformed 
into a feasible solution of $(V',E')$ of cost at most $|H|-2|E'|$.
\end{lemma}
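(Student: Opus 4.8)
The plan is to establish the sharper fact that a minimum topic-connected overlay for $(V,T,\int)$ has exactly $2|E'|$ more edges than a minimum vertex cover of $G$ has vertices; the ``iff'' is then immediate, and the ``moreover'' assertion is precisely the nontrivial half of this equality. Throughout I would use the following structural observation about the constructed instance: for an edge $e=\{a,b\}\in E'$, the users interested in $t_e^{(1)}$ are $a^{(1)},b^{(1)}$, those interested in $t_e^{(2)}$ are $a^{(2)},b^{(2)}$, and those interested in $t_e^{(0)}$ are $a^{(1)},a^{(2)},b^{(1)},b^{(2)}$ — that is, $u^{(j)}$ is interested in a topic of $e$ exactly when $u$ is an endpoint of $e$. (This is what $E'[N[u]]$ yields once we take the reduction from \minVC on triangle-free graphs of bounded degree, where \minVC stays \apx-hard; there no two neighbours of $u$ are adjacent.) For a set of edges $H$ on $V$ I will write $H_1$ and $H_2$ for the edges lying inside $V^{(1)}$ and inside $V^{(2)}$, and $H_\times$ for the edges with one endpoint in each copy, so $|H|=|H_1|+|H_2|+|H_\times|$.

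For the ``$\le$'' direction I would, given a minimum vertex cover $C$ of $G$ with $|C|=k$, take $H^\star$ to consist of a copy of $G$ on $V^{(1)}$, a copy of $G$ on $V^{(2)}$, and the ``rung'' edges $\{v^{(1)},v^{(2)}\}$ for $v\in C$; this has $2|E'|+k$ edges. Feasibility is verified per topic: $t_e^{(1)}$ and $t_e^{(2)}$ are served by the single edges $\{a^{(1)},b^{(1)}\}$ and $\{a^{(2)},b^{(2)}\}$; for $t_e^{(0)}$ the subgraph induced on $\{a^{(1)},a^{(2)},b^{(1)},b^{(2)}\}$ contains $\{a^{(1)},b^{(1)}\}$, $\{a^{(2)},b^{(2)}\}$, and (since $C$ covers $e$) at least one of the rungs $\{a^{(1)},a^{(2)}\}$, $\{b^{(1)},b^{(2)}\}$, hence is connected. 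So $(V,T,\int)$ has a feasible solution of cost $k+2|E'|$.

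For the ``$\ge$''/``moreover'' direction, let $H$ be any feasible solution. Since $t_e^{(1)}$ has only the two interested users $a^{(1)},b^{(1)}$, connectedness forces $\{a^{(1)},b^{(1)}\}\in H_1$; these edges are distinct as $e$ ranges over $E'$, so $|H_1|\ge|E'|$, and symmetrically $|H_2|\ge|E'|$. Consequently, for each $e$ the subgraph induced on $\{a^{(1)},a^{(2)},b^{(1)},b^{(2)}\}$ already contains $\{a^{(1)},b^{(1)}\}$ and $\{a^{(2)},b^{(2)}\}$, so connectedness for $t_e^{(0)}$ forces at least one edge of $H_\times$ joining $\{a^{(1)},b^{(1)}\}$ to $\{a^{(2)},b^{(2)}\}$. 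Now build $C$ by putting, for each crossing edge $f\in H_\times$, the original vertex of $f$'s endpoint in $V^{(1)}$ into $C$; then $|C|\le|H_\times|$. To see $C$ is a vertex cover of $G$: given $e=\{a,b\}\in E'$, pick a crossing edge $f\in H_\times$ inside $\{a^{(1)},a^{(2)},b^{(1)},b^{(2)}\}$ as just guaranteed; its $V^{(1)}$-endpoint is $a^{(1)}$ or $b^{(1)}$, so $a$ or $b$ lies in $C$, covering $e$. Hence $C$ is feasible for $(V',E')$ with $|C|\le|H_\times|=|H|-|H_1|-|H_2|\le|H|-2|E'|$, which is exactly the ``moreover'' claim. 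Applying it to an optimal $H$ gives the matching lower bound, hence the equality, and therefore the stated equivalence.

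I expect the only genuine obstacle to be the final accounting in the last paragraph: one must charge every crossing edge of $H$ to a single vertex of $G$ and still obtain a vertex cover, which works only because a crossing edge relevant to topic $t_e^{(0)}$ has its copy-$1$ endpoint among the endpoints of $e$. The remaining ingredients — the per-topic feasibility checks and degenerate cases such as $E'=\emptyset$, repeated rungs, or crossing edges irrelevant to every topic — are routine and I would verify them only briefly.
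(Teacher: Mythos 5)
Your proof is correct and its skeleton matches the paper's: force the two intra-layer copies of each edge via the topics $t_e^{(1)},t_e^{(2)}$, observe that $t_e^{(0)}$ then forces a cross edge between $\{a^{(1)},b^{(1)}\}$ and $\{a^{(2)},b^{(2)}\}$, and convert the cross edges into a vertex cover; your converse construction (two copies of $G$ plus rungs on the cover vertices) is identical to the paper's. You diverge in two genuine ways. First, your conversion of cross edges into a cover is cleaner: you project every cross edge to the original vertex underlying its $V^{(1)}$-endpoint, and this covers each $e$ because the cross edge forced by $t_e^{(0)}$ necessarily has $a^{(1)}$ or $b^{(1)}$ as that endpoint. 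The paper instead first normalizes the solution by replacing each diagonal edge $\{u^{(1)},v^{(2)}\}$ with the rung $\{u^{(1)},u^{(2)}\}$ (justified by noting that the diagonal serves only the topic $t_{\{u,v\}}^{(0)}$) and then reads the cover off the remaining rungs; your projection avoids having to argue that this surgery preserves feasibility and cost. Second, you are right to flag $E'[N[u]]$: since $N[u]$ is the \emph{closed} neighborhood, an edge between two neighbors of $u$ also lies in $E'[N[u]]$, so in a graph with triangles more than two users are interested in $t_e^{(1)}$ and the lemma's accounting breaks --- for $G=K_3$ the constructed instance has optimum $5$ rather than $k+2|E'|=8$. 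The paper's proof silently asserts that only $u^{(1)}$ and $v^{(1)}$ are interested in $t_e^{(1)}$, i.e., it implicitly assumes triangle-freeness; your explicit restriction to triangle-free bounded-degree instances of \minVC is the right repair, with the caveat that one must then check that the inapproximability constant invoked in the theorem that uses this lemma survives the restriction.
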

\begin{proof}
It is obvious that any feasible solution $H$ of the instance of \minTCO contains
the edge $\{u^{(i)}, v^{(i)}\}$ ($i\in\{1,2\}$), for every edge $e=\{u,v\}$,
because only $u^{(1)}$ and $v^{(1)}$ (resp., $u^{(2)}$ and $v^{(2)}$) are
interested in topic $t_e^{(1)}$ (resp., $t_e^{(2)}$).

Since each feasible solution $H$ of $(V,T,\int)$ must contain the edges
$\{u^{(1)},\penalty-1000 v^{(1)}\}$ and $\{u^{(2)},v^{(2)}\}$, for every edge $e=\{u,v\}\in E'$,
it is sufficient to consider only the topics $t_e^{(0)}$.
The number of edges in $H$ connecting a vertex from $V^{(1)}$ with a vertex from
$V^{(2)}$ is at most $|H|-2|E'|$.

For an edge $e=\{u,v\}$, the vertices that are interested in $t_e^{(0)}$ are
$u^{(1)}, v^{(2)}$, $v^{(1)}$ and $u^{(2)}$. Since these four vertices have
to be connected, $H$ contains at least one edge of $\{u^{(1)},u^{(2)}\}$,
$\{v^{(1)},v^{(2)}\}$, $\{u^{(1)},v^{(2)}\}$ and $\{v^{(1)},u^{(2)}\}$.

The optimal solution of $(V,T,\int)$ contains at most two of these four
edges, namely the edges $\{u^{(1)},u^{(2)}\}$ and $\{v^{(1)},v^{(2)}\}$.
Observe that, for {\em each} edge $f$ that is incident with vertex $u$ in $G$,
the edge $\{u^{(1)},u^{(2)}\}$ connects the solution to be $t_f^{(0)}$-connected.
The only topic that the other two edges connect is $t_{\{u,v\}}^{(0)}$ and
thus they can be replaced by $\{u^{(1)},u^{(2)}\}$ or $\{v^{(1)},v^{(2)}\}$.

In any non-optimal solution, more than two of the four edges may be present and
the replacement of edges $\{u^{(1)},v^{(2)}\}$ and $\{v^{(1)},u^{(2)}\}$ by
$\{u^{(1)},u^{(2)}\}$ and $\{v^{(1)},v^{(2)}\}$, respectively, may lead
to a decrease of the cost of the solution.

We assume that the solution of $(V,T,\int)$ has been transformed
so that it does not contain cross edges between $u^{(i)}$ and $v^{(3-i)}$
($i\in\{1,2\}$).
The vertices that correspond to the edges between the two layers
$V^{(1)}$ and $V^{(2)}$ form a feasible solution of \minVC.
As discussed above, its size is at most $|H|-2|E'|$ for a feasible
solution $H$ and exactly $|H|-2|E'|$ for an optimal solution $H$.
This proves one implication of the first claim and the second claim.

We show that, if \minVC has an optimal solution of size $k$,
then the instance of \minTCO has an optimal solution of size $k+2|E'|$.
From an optimal solution $W\subseteq V'$ of \minVC, we construct the optimal
solution of \minTCO as
$H=\{\{u^{(1)},v^{(1)}\}, \{u^{(2)},v^{(2)}\} \mid \{u,v\}\in E'\} \cup 
\{\{u^{(1)},u^{(2)}\}\mid u \in W\}$. Clearly, the size of $H$ is exactly
$k+2|E'|$. As $W$ is the smallest set of vertices that covers all the
edges of $E'$, its corresponding edges of \minTCO produce the minimal
set of edges that connect every topic with superscript 0. Thus,
$H$ satisfies the connectivity requirement for every topic $t\in T$ and is optimal.
\qed
\end{proof}

We use the \minVC on degree-bounded graphs, which is \apx-hard, to show
lower bounds for our restricted \minTCO.
By the above reduction and the lemma, we prove the following theorem. 

\begin{theorem}
\minTCO with $\max_{v\in U} |\int(v)|\le 6$ cannot be approximated within
a factor of $694/693$ in polynomial time, unless $\classP = \classNP$, even if 
$|\int(v) \cap \int(u)|\le 3$ holds for every pair of different users $u, v\in U$.
\end{theorem}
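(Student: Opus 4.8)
The plan is to combine the reduction and Lemma~\ref{lemma:transf-sol-TCO-VC} with a known explicit inapproximability threshold for \minVC on degree-bounded graphs. The starting point is the observation that the reduction already gives, for every edge $e\in E'$, three topics, so a user $u^{(1)}$ is interested in exactly $2\cdot|E'[N[u]]|$ topics, where $E'[N[u]]$ is the set of edges incident with $u$ in $G$. Hence if $G$ has maximum degree $\Delta$, then $|\int(v)|\le 2\Delta$ for every user $v$. To get the bound $6$ I would therefore instantiate the reduction starting from \minVC restricted to graphs of maximum degree $3$ (cubic or subcubic graphs); then $\max_{v\in U}|\int(v)|\le 6$ automatically. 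Similarly, for two distinct users, the only topics they can share come from edges incident with both underlying vertices; two distinct vertices $u,v$ in a graph share at most one common incident edge (the edge $\{u,v\}$ itself, if present) plus — wait, more carefully, $\int(u^{(i)})\cap\int(v^{(j)})$ consists of topics $t_e^{(0)}$ (and possibly $t_e^{(1)}$ if $i=j=1$, or $t_e^{(2)}$ if $i=j=2$) for edges $e$ lying in both $E'[N[u]]$ and $E'[N[v]]$; since $E'[N[u]]\cap E'[N[v]]$ is at most the single edge $\{u,v\}$ when it exists, each pair of users shares topics coming from at most one edge, giving at most $3$ common topics (the three topics $t_{\{u,v\}}^{(0)},t_{\{u,v\}}^{(1)},t_{\{u,v\}}^{(2)}$ for the pair $u^{(1)},u^{(2)}$, and fewer for the other pairings). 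So the side condition $|\int(v)\cap\int(u)|\le 3$ is also satisfied by construction.

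Next I would feed in a concrete hardness result for \minVC on bounded-degree graphs. The natural choice is the Chlebík--Chlebík type result giving explicit constants; for subcubic graphs there is a known \classNP-hardness of approximating \minVC within some ratio $1+c$ for an explicit small $c>0$. Call the threshold ratio $\rho>1$, i.e.\ it is \classNP-hard to distinguish graphs of max degree $3$ with vertex cover $\le k$ from those with vertex cover $\ge \rho k$. I would then push this through Lemma~\ref{lemma:transf-sol-TCO-VC}: a \minVC instance with optimum $k$ becomes a \minTCO instance with optimum $k+2|E'|$, and, crucially, the lemma's second claim says any feasible \minTCO solution $H$ yields a \minVC solution of size $\le |H|-2|E'|$. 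So a $(694/693)$-approximation for \minTCO would produce a \minVC solution of size at most $(694/693)(k+2|E'|)-2|E'| = (694/693)k + (1/693)\cdot 2|E'|$. Since in a graph of max degree $3$ with no isolated vertices we have $|E'|\le \tfrac32|V'|$ and also $k\ge |E'|/3$ (each cover vertex kills at most $3$ edges), the additive $|E'|$ term is bounded by a constant times $k$, and one chooses the degree bound / the class of instances so that the resulting overall ratio falls below $\rho$, contradicting the \minVC hardness. The number $694/693$ is then exactly whatever comes out of plugging the known \minVC-on-subcubic-graphs constant into this affine bookkeeping.

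The key steps, in order, are: (i) verify that starting from max-degree-$3$ graphs the reduction yields $\max_v|\int(v)|\le 6$ and $|\int(u)\cap\int(v)|\le 3$; (ii) recall/quote the explicit \classNP-hardness ratio $\rho$ for \minVC on such graphs; (iii) use both directions of Lemma~\ref{lemma:transf-sol-TCO-VC} to translate a hypothetical \minTCO $\alpha$-approximation into a \minVC approximation with ratio $\alpha + O(|E'|/k)$; (iv) bound $|E'|/k$ by a constant using the degree bound (so the blow-up in the additive term is controlled) and solve for the largest $\alpha$ for which the composed ratio still beats $\rho$, obtaining $694/693$; (v) conclude that a polynomial-time $(694/693)$-approximation for \minTCO with $\max_v|\int(v)|\le 6$ would give a polynomial-time algorithm contradicting the \minVC hardness, hence $\classP=\classNP$.

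The main obstacle is step (iv): the reduction is not approximation-ratio-preserving in the naive sense because of the additive $2|E'|$ shift between the two optima, so the heart of the argument is the careful accounting showing that on bounded-degree instances this additive term is only a constant fraction of the optimum, and then optimizing the constants to land precisely on $694/693$. One must be careful that the \minVC hardness instances can be taken with no isolated vertices and with the degree bound tight enough that $|E'|$ and $k$ are within a constant factor, and that the transformation in the ``moreover'' part of the lemma (removing the cross edges $\{u^{(1)},v^{(2)}\}$) is applied so that the returned \minVC solution genuinely has size $\le |H|-2|E'|$; all of that is already furnished by Lemma~\ref{lemma:transf-sol-TCO-VC}, so the only real work left is the arithmetic of choosing the instance family and verifying the final inequality $694/693 \cdot (\text{shifted optimum}) - 2|E'| < \rho k$ fails, i.e.\ leads to the contradiction.
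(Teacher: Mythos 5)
Your proposal follows essentially the same route as the paper: reduce from \minVC on $3$-regular graphs (using the known $100/99$-inapproximability from~\cite{CC03}), push a hypothetical $(1+\delta)$-approximation for \minTCO through both directions of Lemma~\ref{lemma:transf-sol-TCO-VC}, and absorb the additive $2|E'|$ shift via $\cost(\ovc)\ge |E'|/3$, which gives a \minVC ratio of $1+7\delta$ and hence $\delta = 1/693$. One small correction to your step (i): the pair attaining $|\int(u^{(1)})\cap\int(u^{(2)})|=3$ shares the topics $t_e^{(0)}$ for the \emph{three distinct edges} $e$ incident with $u$ (not the three topics of a single edge), so this bound comes from the degree bound on $G$ rather than from your claim that every pair shares topics of at most one edge; your counts for the remaining pairings ($\le 2$ for $u^{(i)},v^{(i)}$ and $\le 1$ for $u^{(i)},v^{(3-i)}$ with $u\ne v$) are correct.
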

\begin{proof}
We prove the statement by contradiction. Suppose that there exists an approximation algorithm
$A$ for \minTCO with the above stated restrictions that has the ratio $(1+\delta)$.

Let $G=(V',E')$ be an instance of \minVC and let $G$ be cubic and regular (\ie
each vertex is incident with exactly three edges).
We construct an instance $\itco$ of \minTCO as stated above and we apply our  
algorithm $A$ to it to obtain a feasible solution $\stco$. From such a solution, by 
Lemma~\ref{lemma:transf-sol-TCO-VC}, we create a~feasible solution of the original 
\minVC instance $\svc$. 
We denote by $\otco$ and $\ovc$ the optimal solutions of $\itco$ and $G$, respectively.

Let $d$ be a constant such that $d\cdot\cost(\ovc) = 3|V'|$. Since $G$ is cubic and regular,
$\cost(\ovc) \ge |E'|/3 = |V'|/2$ and thus $d \le 6$.

Observe that, due to Lemma~\ref{lemma:transf-sol-TCO-VC}, 
$\cost(\otco) = \cost(\ovc) + 2|E'| = \cost(\ovc)+d\cdot\cost(\ovc)$
and $\cost(\stco) \ge \cost(\svc) + 2|E'| = \penalty-10000\cost(\svc) + d\cdot\cost(\ovc)$.
These two estimations give us the following bound

$$\frac{\cost(\svc)+d\cdot\cost(\ovc)}{\cost(\ovc)+d\cdot\cost(\ovc)} \le \frac{\cost(\stco)}{\cost(\otco)} 
\le 1+\delta\mathrm{.}$$

The above inequality allows us to bound the ratio of our \minVC solution $\svc$ and the optimal
solution $\ovc$:
$$\frac{\cost(\svc)}{\cost(\ovc)} \le (1+\delta)\cdot(d+1)-d = 1 + \delta(d+1) \le 1+7\delta\mathrm{.}$$

For $\delta:=\frac{1}{693}$, we obtain a $\frac{100}{99}$-approximation algorithm for \minVC on 
3-regular graphs which is directly in contradiction with a theorem proven in~{\cite{CC03}}.\\
\qed
\end{proof}

\begin{corollary}
\minTCO with $\max_{v\in U} |\int(v)| \le 6$ is \apx-hard.
\end{corollary}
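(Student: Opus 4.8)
The corollary is obtained directly from the preceding theorem together with the basic machinery of approximation-preserving reductions. Recall that a minimization \npo problem is \apx-hard as soon as there is an \ap-reduction to it from some \apx-hard problem. The plan is therefore to observe that the construction used in the proof of the theorem — the map $f$ sending a cubic graph $G=(V',E')$ to the instance $\itco=(V,T,\int)$ defined above, together with the solution-transformation $g$ supplied by Lemma~\ref{lemma:transf-sol-TCO-VC} — is precisely such a reduction from \minVC restricted to cubic (and, to honour the bound $|\int(u)\cap\int(v)|\le3$, triangle-free) graphs, a problem that is \apx-hard by \cite{CC03} (see also \cite{PY91}).

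First I would check the five conditions of Definition~\ref{def:ptasred} with $\alpha:=7$. Conditions~1--4 are immediate: $f$ and $g$ do not depend on $\varepsilon$, both are computable in polynomial time (the instance $\itco$ has $2|V'|$ users and $3|E'|$ topics, and the transformation of a \minTCO solution into a \minVC solution in Lemma~\ref{lemma:transf-sol-TCO-VC} only removes the forced layer edges and the cross edges), so the time complexity is trivially nonincreasing in $\varepsilon$. Condition~5 is exactly the chain of inequalities already carried out in the proof of the theorem: from $\cost(\otco)=(d+1)\cdot\cost(\ovc)$, $\cost(\stco)\ge\cost(\svc)+2|E'|$, and $d\le6$ one gets that $\cost(\stco)/\cost(\otco)\le1+\varepsilon$ forces $\cost(g(G,\stco,\varepsilon))/\cost(\ovc)=\cost(\svc)/\cost(\ovc)\le1+7\varepsilon$. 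Since every instance produced by $f$ satisfies $\max_{v\in V}|\int(v)|\le6$ and $|\int(u)\cap\int(v)|\le3$, composing this \ap-reduction with an \ap-reduction into \apx-hard cubic \minVC yields \apx-hardness of \minTCO under the stated restrictions.

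The part that needs the most care is certifying that the composition is genuinely an \ap-reduction rather than merely a polynomial-time Karp reduction preserving the optimum: one must use the ``moreover'' clause of Lemma~\ref{lemma:transf-sol-TCO-VC}, which guarantees the cost relation $\cost(\svc)\le|H|-2|E'|$ for \emph{every} feasible solution $H$, not only optimal ones — this is what makes condition~5 hold for arbitrary $\stco$. A secondary point is to make sure the source family of \minVC instances still witnesses \apx-hardness after the additional structural restriction (triangle-free cubic graphs), so that the bounds $\max_{v}|\int(v)|\le6$ and $|\int(u)\cap\int(v)|\le3$ are respected; this is standard, since \apx-hardness of \minVC already holds for such graphs. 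Given all this, the corollary follows, and moreover one sees that \minTCO with $\max_{v\in U}|\int(v)|\le6$ is not merely \apx-hard but inherits an explicit inapproximability threshold from \minVC on cubic graphs.
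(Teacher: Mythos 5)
Your proposal is correct and follows essentially the same route as the paper: the corollary is an immediate consequence of the preceding theorem, whose proof is exactly the \ap-reduction (with $\alpha=7$, via Lemma~\ref{lemma:transf-sol-TCO-VC}) from \apx-hard \minVC on cubic graphs that you describe. Your side remark about restricting to triangle-free cubic graphs so that $E'[N[u]]$ contains only the three edges incident to $u$ is a sensible precaution that the paper glosses over, but it does not change the argument since \minVC remains \apx-hard on that class.
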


\begin{corollary}
\minTCO with $|\int(v) \cap \int(u)|\le3$, for all users $u,v \in U$, is \apx-hard.
\end{corollary}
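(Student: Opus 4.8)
The plan is to derive this directly from the theorem just proved. Recall that, given a cubic regular graph $G=(V',E')$, the construction preceding Lemma~\ref{lemma:transf-sol-TCO-VC} produces an instance $(V,T,\int)$ of \minTCO with $V=V^{(1)}\cup V^{(2)}$ and $\int(u^{(1)})=\bigcup_{e\in E'[N[u]]}\{t_e^{(0)},t_e^{(1)}\}$, $\int(u^{(2)})=\bigcup_{e\in E'[N[u]]}\{t_e^{(0)},t_e^{(2)}\}$. The first observation I would record is that every such instance already satisfies $|\int(v)\cap\int(u)|\le 3$ for all distinct users $v,u\in U$: the only pair attaining value $3$ is $u^{(1)},u^{(2)}$, whose common topics are exactly $\{t_e^{(0)}\mid e\in E'[N[u]]\}$ (and $|E'[N[u]]|\le 3$ because $|\int(u^{(1)})|=2|E'[N[u]]|\le 6$), while any two users coming from distinct vertices of $G$ share even fewer topics. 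This is precisely the extra property already asserted in the statement of the theorem, so no new construction is needed.

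Next I would note that the proof of the theorem in fact supplies all the ingredients of an \ap-reduction (in the sense of Definition~\ref{def:ptasred}) from \minVC on cubic regular graphs to \minTCO restricted to the instances above: the function $f$ is the construction $(V',E')\mapsto(V,T,\int)$, the function $g$ is the transformation of Lemma~\ref{lemma:transf-sol-TCO-VC}, and the chain of inequalities in the theorem's proof shows that a $(1+\delta)$-approximate \minTCO solution yields a \minVC solution of ratio at most $1+\delta(d+1)\le 1+7\delta$, so condition~5 holds with $\alpha=7$; conditions~1--4 are immediate from the polynomial-time construction. Since \minVC on cubic graphs is \apx-hard (\cite{CC03}), this makes \minTCO restricted to the constructed instances \apx-hard — a fact also recorded by the corollary stating that \minTCO with $\max_{v\in U}|\int(v)|\le 6$ is \apx-hard.

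Finally I would observe that the class of instances in the present corollary, namely those with $|\int(v)\cap\int(u)|\le 3$ for all $u,v$, is a superclass of the instances produced by that reduction (it is obtained simply by dropping the side constraint $\max_{v\in U}|\int(v)|\le 6$). An \ap-reduction whose image lies in the smaller class is a fortiori an \ap-reduction into the larger class, so the \apx-hardness is inherited, which proves the corollary. I expect essentially no obstacle here: the argument is a repackaging of the theorem, and the only point that has to be checked — that the constructed instances meet the pairwise-intersection bound — is already built into the theorem's statement.
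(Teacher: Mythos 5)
Your proposal is correct and follows exactly the route the paper intends: the corollary is an immediate consequence of the preceding theorem, since the hard instances constructed there already satisfy $|\int(v)\cap\int(u)|\le 3$ for all pairs of users, so the larger class of instances inherits the \apx-hardness (indeed the explicit $694/693$ inapproximability bound). The paper gives no separate proof for this corollary, and your write-up simply makes the implicit argument explicit.
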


This result is almost tight, the case when $|\int(v) \cap \int(u)| \le 2$ is
still open. The following theorem shows that \minTCO with 
$|\int(v) \cap \int(u)|\le 1$, for every pair of distinct users $u, v\in U$, 
can be solved in linear time. 

\begin{theorem}
\minTCO can be solved in linear time, if $|\int(v) \cap \int(u)|\le 1$ holds
for every pair of users $u, v\in U$, $u\neq v$.
\end{theorem}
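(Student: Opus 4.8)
The key structural observation is that the condition $|\int(v)\cap\int(u)|\le 1$ for all distinct users $u,v$ means that any two users share at most one topic, so there is no ``interaction'' between distinct topics beyond a single common vertex. I would first argue that the sets $U_t$ ($t\in T$) behave almost like a partition of the edge requirements: two topic-sets $U_{t}$ and $U_{t'}$ can intersect in at most one user, because if $w,w'\in U_t\cap U_{t'}$ with $w\neq w'$ then $\{t,t'\}\subseteq\int(w)\cap\int(w')$, contradicting the hypothesis. Hence the ``hypergraph'' whose hyperedges are the sets $U_t$ is linear.

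Next I would exploit the fact that a minimum number of edges making $G[U_t]$ connected is simply a spanning tree on $U_t$, costing $|U_t|-1$ edges, and that in a linear hypergraph these spanning trees can be chosen with essentially no shared edges. Concretely, the plan is: build $G$ by, for each topic $t$, adding the edges of an arbitrary spanning tree (say a star) on $U_t$. This is clearly feasible. For optimality, I would show that $\cost(\otco)\ge\sum_{t\in T}(|U_t|-1)$ by a counting/charging argument: an edge $\{u,v\}$ of any feasible solution lies in $G[U_t]$ only if $\{u,v\}\subseteq U_t$, i.e.\ $t\in\int(u)\cap\int(v)$; since this intersection has size at most $1$, each solution edge is ``used'' by at most one topic, so the feasible solution restricted to each $U_t$ is a connected spanning subgraph on $U_t$ using a private set of edges, forcing at least $|U_t|-1$ edges per topic and the total lower bound. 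Combined with feasibility of the star construction (whose cost meets this bound once we also check that the stars across topics do not accidentally coincide — which is fine since we only need an \emph{upper} bound matching the lower bound, and duplicate edges only help), we get optimality. A minor care point is that distinct topics might demand the same edge $\{u,v\}$ only if both $t,t'\in\int(u)\cap\int(v)$, impossible here, so the stars are edge-disjoint and the cost is exactly $\sum_t(|U_t|-1)$.

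Finally, for the running time, I would note that this construction is a single pass: for each topic $t$ read off $U_t$ from the interest lists and output $|U_t|-1$ edges of a star. The total work is $O\!\big(\sum_{t}|U_t|\big)$, which is linear in the input size (the input lists, for each user $u$, the topics $\int(u)$, and $\sum_t|U_t|=\sum_u|\int(u)|$). So the algorithm runs in linear time and returns a provably optimal solution.

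The main obstacle I anticipate is making the lower-bound charging argument fully rigorous: one must argue that a feasible solution, when restricted to $E[U_t]$, is genuinely a \emph{connected} spanning subgraph on all of $U_t$ (immediate from topic-connectedness) \emph{and} that these restrictions are pairwise edge-disjoint (immediate from linearity), and then that disjoint connected spanning subgraphs on the $U_t$ sum to at least $\sum_t(|U_t|-1)$ edges. The subtlety is only that a solution edge not contained in any $U_t$ contributes nothing to the bound but also nothing against it, so the inequality $\cost(\otco)\ge\sum_t(|U_t|-1)$ is clean; no isolated-topic or singleton-$U_t$ edge cases cause trouble since $|U_t|-1=0$ there.
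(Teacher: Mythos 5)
Your proposal is correct and follows essentially the same route as the paper: build a star (spanning tree) on each $U_t$, and observe that since $|\int(u)\cap\int(v)|\le 1$ every solution edge can serve at most one topic, so the per-topic restrictions of any feasible solution are edge-disjoint connected spanning subgraphs, forcing the lower bound $\sum_t(|U_t|-1)$ that the star construction attains. Your write-up merely makes the paper's charging argument more explicit (and, incidentally, states the optimal cost more precisely than the paper's $|T|\cdot(|U|-1)$ remark).
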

\begin{proof}
We execute the following simple algorithm. First set the solution
$E:=\emptyset$. Then sequentially, for each topic $t$, choose its representative
$v^* \in U$ ($t \in \int(v^*)$) and add edges 
$\{\{v^*,u\}\mid u\in U_t\setminus \{v^*\}\}$ to the solution $E$.
We show that, if $|\int(v) \cap \int(u)|\le 1$, for all distinct $u,v\in U$, 
then the solution $E$ is optimal.

Observe that, in our case, any edge in any feasible solution is present 
because of a unique topic. We cannot find an edge $e=\{u,v\}$ of the
solution that belongs to the subgraphs for two different topics.
(Otherwise $|\int(v) \cap \int(u)|>1$ and our assumption would be wrong 
for the two endpoints of the edge $e$.)
Thus, any solution consisting of spanning trees for every topic is feasible
and optimal. Note that its size is $|T|\cdot(|U|-1)$.
\qed
\end{proof} 

\begin{corollary}
\minTCO with $\max_{u\in U}|\int(u)| \le 2$ can be solved in linear time. 
\end{corollary}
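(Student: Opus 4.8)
Here is how I would prove the corollary.

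The plan is to reduce this case to the preceding theorem, which already disposes of every instance with $|\int(u)\cap\int(v)|\le 1$ for all distinct $u,v$. When $\max_{u\in U}|\int(u)|\le 2$, the only obstruction to that hypothesis is a pair of users with an \emph{identical} two-element interest set, $\int(u)=\int(v)=\{t_a,t_b\}$; for such a pair an edge $\{u,v\}$ of a solution is ``shared'', lying at once in $G[U_{t_a}]$ and in $G[U_{t_b}]$. For a value $\tau$ of $\int$ write $W_\tau=\{u\in U\mid \int(u)=\tau\}$; the whole issue concerns the classes $W_\tau$ with $|\tau|=2$, and the idea is to collapse each of them to a single super-user.

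First I would prove a lower bound. An optimal solution $G$ has no edge $\{u,v\}$ with $\int(u)\cap\int(v)=\emptyset$, since it lies in no $G[U_t]$ and could be deleted. Let $D$ be the number of shared edges of $G$ and $S$ the number of the remaining ones, so $|E(G)|=S+D$; since a shared edge with both endpoints in $W_{\{t_a,t_b\}}$ belongs to $G[U_{t_a}]$ and $G[U_{t_b}]$ and to no other $G[U_t]$, we get $\sum_{t\in T}|E(G[U_t])|=S+2D$, which is at least $\sum_t(|U_t|-1)$ because every $G[U_t]$ is connected. Moreover the shared edges lying inside one class $W_\tau$ (with $\tau=\{t_a,t_b\}$) form a forest in an optimal solution: a cycle among them would lie inside $W_\tau\subseteq U_{t_a}\cap U_{t_b}$, hence inside $G[U_{t_a}]$ and $G[U_{t_b}]$ and in no other $G[U_t]$, so removing one of its edges would give a strictly cheaper feasible solution. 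Thus $D\le\sum_{|\tau|=2}(|W_\tau|-1)$ (the sum over two-element interest sets $\tau$) and $|E(G)|=(S+2D)-D\ge\sum_t(|U_t|-1)-\sum_{|\tau|=2}(|W_\tau|-1)$.

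Next I would match the bound by a construction. Inside each two-element class $W_\tau$ put a spanning path, which uses $|W_\tau|-1$ shared edges, and contract $W_\tau$ to one super-user interested in the two topics of $\tau$. In the contracted instance $|\int(u)\cap\int(v)|\le 1$ for every pair $u,v$: two distinct two-element classes meet in at most one topic, and any class meets a one-element type in at most one topic. Running the algorithm of the preceding theorem on the contracted instance returns a solution of size $\sum_t(r_t-1)$, where $r_t$ is the number of super-users interested in $t$; expanding each super-user back along its path turns this into a feasible solution of the original instance of size $\sum_{|\tau|=2}(|W_\tau|-1)+\sum_t(r_t-1)$. Since $r_t=|U_t|-\sum_{\tau\ni t,\,|\tau|=2}(|W_\tau|-1)$ and each two-element $\tau$ is counted for exactly its two topics, $\sum_t(r_t-1)=\sum_t(|U_t|-1)-2\sum_{|\tau|=2}(|W_\tau|-1)$, so the constructed solution has size exactly $\sum_t(|U_t|-1)-\sum_{|\tau|=2}(|W_\tau|-1)$, which meets the lower bound; hence it is optimal. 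Bucketing users by interest set, forming the paths and contracting, invoking the linear-time algorithm of the preceding theorem, and expanding are all doable in linear time.

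The main obstacle is the lower-bound step: one has to see not merely that shared edges can be assumed to form forests inside the classes, but, through the edge count, that collapsing the two-element classes never throws away an optimal solution; once that is in hand, the rest is routine arithmetic with the quantities $|U_t|$ and $|W_\tau|$.
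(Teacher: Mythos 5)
Your proposal is correct, and it is in fact more complete than what the paper offers: the paper states this corollary with no proof at all, presenting it as an immediate consequence of the preceding theorem (the linear-time algorithm for instances with $|\int(u)\cap\int(v)|\le 1$). That derivation is not immediate, because $\max_{u\in U}|\int(u)|\le 2$ does \emph{not} imply $|\int(u)\cap\int(v)|\le 1$ --- precisely the obstruction you isolate, namely two users with the identical two-element interest set $\{t_a,t_b\}$. Your repair is sound on both sides: the counting identity $\sum_{t}|E(G[U_t])|=S+2D$ together with the observation that shared edges inside a class $W_\tau$ form a forest in an optimal solution gives the lower bound $\sum_t(|U_t|-1)-\sum_{|\tau|=2}(|W_\tau|-1)$, and the contraction of each class $W_\tau$ to a super-user legitimately reduces to the hypothesis of the preceding theorem (any two distinct interest sets of size at most $2$ meet in at most one topic once identical classes are merged), after which the expansion along the spanning paths matches the lower bound exactly. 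The only cosmetic caveats are that the sums over two-element $\tau$ must range over nonempty classes (which your phrasing ``a value $\tau$ of $\int$'' already ensures) and that topics with $|U_t|\le 1$ contribute nothing, but these do not affect the argument. In short: where the paper asserts, you prove, and the extra work is genuinely needed.
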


\section{A Polynomial-Time Algorithm for \minTCO with Bounded Number of Topics}\label{sec:bounded-number-of-topics}

In this section, we present a simple brute-force algorithm that achieves
a polynomial running time when the number of topics is bounded by
$|T| \le \ll|U|-\frac{3}{2}\lll|U|$.

\begin{theorem}
The optimal solution of \minTCO can be computed in polynomial time if 
$|T|\leq (1 + \varepsilon(|U|))^{-1} \cdot\log\log |U|$, for a function
$$\varepsilon(n)\ge\frac{3/2\lll n}{\ll n-3/2\lll n}\mathrm{.}$$
\end{theorem}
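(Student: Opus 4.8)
The plan is to use an exhaustive search over feasible solutions, but instead of iterating over all subsets of the $\binom{|U|}{2}$ possible edges (which is far too many), I would exploit the fact that with only $m=|T|$ topics there are very few distinct ``interest profiles'' among the users: each user $u$ is characterized by $\int(u)\subseteq T$, so there are at most $2^m$ profile classes. Users with identical profiles are interchangeable, so an optimal solution has a very restricted structure. More precisely, for a fixed topic $t$, the relevant information is only how the users in $U_t$ are interconnected, and since topic-connectedness is the only constraint, one can argue that there is always an optimal solution in which, within each profile class, the induced subgraph is a (possibly empty) star or path — in any case, the \emph{number} of ``meaningfully distinct'' solutions to try depends only on $m$, not on $n$.

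Concretely, first I would reduce to the instance in which each of the at most $2^m$ profile classes is nonempty (empty classes are irrelevant), and observe that within a class of size $s$ an optimal topic-connected overlay never needs more than $s-1$ internal edges, and that the ``contact pattern'' between classes can be taken to be very simple. I would then bound the number of candidate solutions one must enumerate by some function $g(m)$ — something like $2^{O(2^m \cdot 2^m)}$ or, after a sharper structural argument, $2^{O(4^m)}$ — which is independent of $n$ once the per-class multiplicities are accounted for separately (the internal spanning structure of a class of size $s$ contributes a polynomial-in-$n$, not exponential, factor). Each candidate is checked for topic-connectedness in time polynomial in $n$ and $m$. The total running time is therefore $g(m)\cdot \mathrm{poly}(n)$.

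It then remains to verify that $g(m)$ is polynomial in $n$ under the stated bound on $m$. Writing $N=|U|$ and $m \le (1+\varepsilon(N))^{-1}\,\ll N$, and using $\varepsilon(N)\ge \frac{3/2\,\lll N}{\ll N - 3/2\,\lll N}$, a short computation gives $(1+\varepsilon(N)) \cdot m \le \ll N$, hence $m \le \ll N - \frac{3}{2}\lll N$, hence $2^m \le \log N / (\log N)^{3/2}\cdot(\text{something})$; more carefully, $2^m \le 2^{\ll N}/2^{3/2\,\lll N} = \log N / (\log\log N)^{3/2}$. Then a bound of the shape $g(m)=2^{O((2^m)^2)}$ gives $\log g(m) = O\!\big((\log N)^2/(\log\log N)^{3}\big) = O(\log N \cdot \log N/(\log\log N)^3)$, which is $o(\log N \cdot N^{o(1)})$; being a bit more generous, if the structural argument only yields $g(m)=2^{O(4^m)}$ then $\log g(m)=O\!\big((\log N)^{2}/(\log\log N)^{3}\big)$ still, so $g(m)=N^{o(\log N)}$... which is \emph{not} polynomial — so the structural argument must be tight enough to push the exponent down to $g(m)=2^{\mathrm{poly}(2^m)}$ with the polynomial being at most linear, i.e.\ $g(m)=2^{O(2^m)}$, giving $\log g(m)=O(\log N/(\log\log N)^{3/2})=o(\log N)$, hence $g(m)=N^{o(1)}$ and the whole algorithm runs in polynomial time.

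The main obstacle is precisely this last point: making the enumeration bound $g(m)$ small enough — essentially $2^{O(2^m)}$ rather than $2^{O(4^m)}$ — which requires a genuine structural lemma saying that an optimal overlay can be described by specifying, for each profile class and each subset of topics, only a constant amount of ``interface'' data, so that the search space has size singly exponential in the number of profile classes. Getting the constants in this structural normalization right, and then tracking them through the $\ll$/$\lll$ arithmetic so that the exponent of the running time stays bounded (independent of $N$), is where the real work lies; the topic-connectedness checking and the reduction to profile classes are routine.
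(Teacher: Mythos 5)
There is a genuine gap here, and you have located it yourself: your whole argument rests on an unproven ``structural lemma'' that would bound the number of candidate solutions by $2^{O(2^{|T|})}$, and you correctly compute that the naive fallback $2^{O(4^{|T|})}$ (all edge subsets between the at most $2^{|T|}$ profile classes) is too large to give polynomial time. Without that lemma the proof does not go through. The paper closes exactly this gap with two elementary observations you are missing. First, instead of merely grouping users with \emph{identical} interest sets, remove every user $u$ that is \emph{dominated} by some $v$, i.e.\ $\int(u)\subseteq\int(v)$: such a $u$ can always be reattached afterwards by the single edge $\{u,v\}$ without increasing the cost. After this pruning the remaining interest sets form an antichain in $2^{T}$, so by Sperner's theorem the number of \emph{remaining users} (not just of classes) is $m\le\binom{t}{\lfloor t/2\rfloor}\le 2^{t}/\sqrt{t}$, where $t=|T|$; this also disposes of your unexamined claim about the ``internal spanning structure'' of a class. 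Second --- and this is the decisive counting step --- the optimum of the pruned instance has at most $t(m-1)$ edges, because the union of one spanning tree per topic is feasible. Hence the brute force need only enumerate edge sets of cardinality at most $t(m-1)$ out of the $\binom{m}{2}$ possible edges, which is roughly $m^{O(tm)}=2^{O(t^{2}m)}=2^{O(t^{3/2}2^{t})}$ candidates rather than $2^{\Theta(4^{t})}$.

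Note also that your target $2^{O(2^{t})}$ is stronger than necessary: the bound $2^{O(t^{3/2}2^{t})}$ already suffices, because with $2^{t}\le\log n/(\ll n)^{3/2}$ the exponent becomes $O\bigl((\ll n)^{3/2}\cdot\log n/(\ll n)^{3/2}\bigr)=O(\log n)$. Indeed, the factor $(\ll n)^{3/2}$ in the exponent is precisely what the constant $3/2$ in the hypothesis $\varepsilon(n)\ge\frac{3/2\lll n}{\ll n-3/2\lll n}$ is designed to absorb, and the $\sqrt{t}$ saving from Sperner's bound is what drives the paper's two-case analysis on $t$. So the difficulty you flag as ``where the real work lies'' is resolved not by a deep normalization of optimal overlays but by the domination reduction plus the trivial bound $\mathrm{OPT}\le t(m-1)$; as written, your proposal is an outline of the right strategy whose crucial quantitative step is left open.
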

\begin{proof}
Let $(U,T,\int)$ be an instance of \minTCO such that $|T|\leq (1 + \varepsilon(|U|))^{-1}\cdot\log\log |U|$.
Moreover, $|T|>2$, otherwise the problem is solvable in polynomial time.
We shorten the notation by setting $t=|T|$ and $n=|U|$.

First observe that, if $u,v\in U$ and $\int(u)\subseteq\int(v)$, instead of solving instance
$(U,T,\int)$, we can solve \minTCO on instance $(U\setminus \{u\},T,\int)$ and add to such
solution the direct edge $\{u,v\}$. 
Note that $u$ has to be incident with at least one edge in any solution. Thus, the addition
of the edge $\{u,v\}$ cannot increase the cost. 
Moreover, any other user that would be connected to $u$ in some solution can be also
connected to $v$. Thus, we can remove $u$, solve the smaller instance and then add $u$ by
a single edge. Such a solution is feasible and its size is unchanged.
We say that vertex $u$ is {\em dominated} by the vertex $v$ if $\int(u)\subseteq \int(v)$.

Therefore, before applying our simple algorithm, we remove from the instance
all the users that are dominated by some other user. We denote the set of remaining
users (\ie those with incomparable sets of interesting topics) by $M$.
The largest system of incomparable sets on $n$ elements is called a Sperner system and
it is a well known fact that its size is at most ${n \choose \lfloor n/2 \rfloor}$.
Since every user in $M$ must have different set of interesting topics and these sets are
all incomparable, we have $$m=|M| \le {t \choose \lfloor t/2 \rfloor} \le 
\frac{2^t}{\sqrt{t}}\mathrm{.}$$ (To verify the bound, consider $t$ to be odd or even
and use ${2n \choose n} \le \frac{4^n}{\sqrt{3n+1}}$, $n \ge 1$.)

Our simple algorithm exhaustively searches over all the possible solutions on instance
$(M,T,\int)$ and then reconnects each of the removed users $U\setminus M$ by a single edge.
The transformation to set $M$ and the connection of the removed users is clearly
polynomial. Thus, we only need to show that our exhaustive search is polynomial.

Observe that the size of the optimal solution is at most $t(m-1)$, as
merged spanning trees, for all the topics, form a feasible solution.
Our algorithm exhaustively searches over all possible solutions,
\ie it tries every possible set of $i$ edges for $1\le i \le t(m-1)$
and verifies the topic-connectivity requirements for such sets of edges.
The verification of each set can be done in polynomial time. The number
of sets it checks can be bounded as follows:
\begin{eqnarray*}
\sum^{t(m-1)}_{i=1} {{m \choose 2} \choose i} & \leq & \sum^{tm}_{i=1} {m^2 \choose i} \\
~ &\leq& tm\cdot {m^2 \choose tm} \\
~ &\leq& tm\cdot m^{tm} \\
~ &\leq& m^{tm} \cdot O(\log^2 n) 
\end{eqnarray*}
(Note that $tm \le m^2/2$ and thus the binomial coefficient is maximal in $tm$. Otherwise
the number of all possible choices of edges into a solution is polynomial in $n$.)

To check a polynomial number of sets, it is sufficient to bound the factor $m^{tm}$ by a polynomial, 
\ie by at most $n^c$ for some $c>0$. (In all our calculations, $\log$ stands for the binary logarithm,
however any other logarithm can be used as the change will effect the exponent by a constant.)
We consider two cases:
\begin{enumerate}
\renewcommand{\theenumi}{\Alph{enumi}}
\renewcommand{\labelenumi}{\theenumi:}
\item First assume that $t \le \frac{\ll n}{1+2\varepsilon(n)}$, then $m\le2^t\le(\log n)^{(1+2\varepsilon(n))^{-1}}$.

We use the upper bounds on $t$ and $m$ to estimate the number of sets our exhaustive search has to check:
$$m^{tm}\le (\log n)^{(1+2\varepsilon(n))^{-2}\cdot\ll n \cdot (\log n)^{(1+2\varepsilon(n))^{-1}}} \le n^c\mathrm{.}$$
Then we take the logarithm of the inequality, leading to
\begin{align*}
(1+2\varepsilon(n))^{-2}\cdot\ll^2 n & \le c\cdot(\log n)^{\frac{2\varepsilon(n)}{1+2\varepsilon(n)}}\mathrm{.}\\
\intertext{After another logarithm operation, we obtain the following inequality:}
-2\log(1+2\varepsilon(n)) + 2\lll n & \le \frac{2\varepsilon(n)}{1+2\varepsilon(n)}\cdot\ll n +\log c\mathrm{.}
\end{align*}
We prove inequality~(\ref{egn:bound-on-epsilon}) instead. In the end, we will see that the function
$\varepsilon(n)$ is positive, except for the first few values. Thus, for large inputs,
$2\log(1+2\varepsilon(n))$ is positive and thus the above inequalities will hold, too.

\begin{eqnarray} \label{egn:bound-on-epsilon}
2 \lll n & \le & \frac{2\varepsilon(n)}{1+2\varepsilon(n)}\cdot\ll n
\end{eqnarray}
We are now able to estimate the function $\varepsilon(n)$:
\begin{eqnarray} \label{eqn:final-epsilon}
\varepsilon(n) \ge \frac{\lll n}{\ll n-2\lll n}\mathrm{.}
\end{eqnarray}

Due to the following case, we use $\varepsilon(n) \ge \frac{3/2 \lll n}{\ll n-3/2\lll n}$ that also 
satisfies~(\ref{eqn:final-epsilon}) and is positive for $n\ge 16$.
\item To conclude the proof, assume that $\frac{\ll n}{1+2\varepsilon(n)} < t \le \frac{\ll n}{1+\varepsilon(n)}$.
Since we have both an upper and a lower bound on $t$, we can refine the estimation of $m$:
$$m \le \frac{2^t}{\sqrt{t}} \le (\log n)^{(1+\varepsilon(n))^{-1}}\cdot (1+2\varepsilon(n))^{1/2}\cdot
(\ll n)^{-1/2}\mathrm{.}$$
We show that $m^{tm}$ is polynomial in $n$ similarly as in the previous case:
$$ 
(\log n)^{
  (1+\varepsilon(n))^{-2} \cdot 
  (\ll n)^{1/2} \cdot 
  (\log n)^{
    (1+\varepsilon(n))^{-1}
  }\cdot 
  (1+2\varepsilon(n))^{1/2}
} 
\le m^{tm} \le n^c\mathrm{.}$$

Then we take the logarithm of the inequality, leading to
$$(1+\varepsilon(n))^{-2} \cdot (\ll n)^{3/2} \cdot (1+2\varepsilon(n))^{1/2}
\le 
c \cdot (\log n)^{\frac{\varepsilon(n)}{1+\varepsilon(n)}}\mathrm{.}
$$

Assume that $(1+2\varepsilon(n))^{1/2} \le 1+\varepsilon(n)$, except for the first few values,
then it is sufficient to prove a simpler inequality:

\begin{align*}
(1+\varepsilon(n))^{-1} \cdot (\ll n)^{3/2} &\le c \cdot (\log n)^{\frac{\varepsilon(n)}{1+\varepsilon(n)}}\mathrm{.}\\
\intertext{After another logarithm operation, we obtain the following inequality:}
-\log(1+\varepsilon(n)) + 3/2\cdot \lll n   & \le \frac{\varepsilon(n)}{1+\varepsilon(n)} \cdot\ll n + \log c
\mathrm{.}
\end{align*}

Again, assuming that $\log(1+\varepsilon(n)) > 0$ if $n$ tends to infinity, to prove the above inequality, 
it is sufficient to show that

\begin{eqnarray*} 
3/2 \lll n & \le & \frac{\varepsilon(n)}{1+\varepsilon(n)}\cdot\ll n\mathrm{.}
\end{eqnarray*}

Thus we are able to bound the function $\varepsilon(n)$ as

\begin{eqnarray*} 
\varepsilon(n) \ge \frac{3/2\lll n}{\ll n-3/2\lll n}\mathrm{.}
\end{eqnarray*}

Observe that $\varepsilon(n)>0$ for $n\ge 16$, thus both assumptions that we made hold for $|U|\ge 16$
which concludes the proof.
\end{enumerate}
\qed
\end{proof}

\section{Conclusion} \label{sec:conclusion}

In this paper, we have closed the gap in the approximation hardness of \minTCO by showing 
its \logapx-completeness. We studied a subproblem of \minTCO  
where the number of users interested in a common topic is bounded by a~constant $d$. 
We showed that, if $d\le2$, the restricted \minTCO is in \classP and, if $d\ge3$, it is 
\apx-complete. The latter result, together with the constant approximation algorithm we presented,
allows us to prove lower bounds on approximability of these
special instances that match any lower bound known for any problem from the class \apx.
Furthermore, we studied instances of \minTCO where the number of topics in which a single user
is interested in is bounded by a~constant $d$. We presented a reduction that shows
that such instances are \apx-hard for $d=6$. In this reduction, any two users have
at most three common topics, thus the reduction shows also that \minTCO restricted in this way 
is \apx-hard. We also investigated \minTCO with a bounded number
of topics. Here we presented a polynomial-time algorithm for 
$|T| \le (1 + \varepsilon(|U|))^{-1}\cdot\log\log |U|$ and a function
$\varepsilon(n)\ge\frac{3/2\lll n}{\ll n-3/2\lll n}$. The case where $t=\omega(\log\log n)$ and
$t=o(n)$ remains to be a challenging open problem.

\bibliographystyle{elsarticle-num}

\end{document}